\numberwithin{equation}{section}
\numberwithin{table}{section}
\numberwithin{figure}{section}
\newtheorem{theorem}{Theorem}[section]
\newtheorem{lemma}{Lemma}[section]
\theoremstyle{remark}
\newtheorem{assumption}{Assumption}
\newcommand \On {\mathscr{O}}
\DeclareMathAlphabet{\mathpzc}{OT1}{pzc}{m}{it}
\newcommand \Op {\mathscr{O}_{\mathpzc{p}}}
\newcommand \op {\mathpzc{o}_{\mathpzc{p}}}
\newcommand \ninf {n\rightarrow\infty}
\DeclareMathOperator{\plim}{plim}
\newcommand \ninv  {n^{-1}}
\definecolor{Grey}{RGB}{173,173,173}
\begin{document}

\title{Efficient closed-form estimation of large spatial autoregressions\thanks{I am grateful to the associate editor, three referees and Peter Robinson for very helpful comments that improved the paper significantly, and to Christos Kolympiris for permission to use the data in the empirical illustration.}}

\date{\today }
\author{Abhimanyu Gupta\thanks{Department of Economics, University of Essex, Wivenhoe Park, Colchester CO4 3SQ, UK. E-mail: a.gupta@essex.ac.uk.} \thanks{Research supported by ESRC grant ES/R006032/1.} }

\maketitle

\begin{abstract}
Newton-step approximations to pseudo maximum likelihood estimates of spatial autoregressive models with a large number of parameters are examined, in the sense that the parameter space grows slowly as a function of sample size. These have the same asymptotic efficiency properties as maximum likelihood under Gaussianity but are of closed form. Hence they are computationally simple and free from compactness assumptions, thereby avoiding two notorious pitfalls of implicitly defined estimates of large spatial autoregressions. When commencing from an initial least squares estimate, the Newton step can also lead to weaker regularity conditions for a central limit theorem than some extant in the literature. A simulation study demonstrates excellent finite sample gains from Newton iterations, especially in large multiparameter models for which grid search is costly. A small empirical illustration shows improvements in estimation precision with real data.
\end{abstract}

\textbf{Keywords:} Spatial autoregression, efficiency, many parameters, networks

\textbf{JEL Classification: }C21, C31, C33, C36

%
\clearpage
\allowdisplaybreaks
\section{Introduction}
Spatial autoregressive (SAR) models, introduced by \cite{cliff1973spatial}, are popular tools for modelling cross-sectionally dependent economic data. The pre-eminent feature of such models is the presence of one or more `spatial weight' matrices, which parsimoniously capture the dependence between units in the sample. Such dependence need not be geographic in nature, indeed the spatial weight matrix is known by other terms such as `adjacency matrix', `network link matrix' and `sociomatrix'. For $n\times 1$ vectors $y_n$ and $u$ of responses and unobserved disturbances, respectively, and an $n\times k$ covariate matrix $X_n$, the SAR model is 
\begin{equation}\label{hosar}
y_n=\sum_{i=1}^p\lambda_{0in}W_{in}y_n+X_n\beta_{0n}+u,
\end{equation}
where the elements of the $n\times n$ spatial weight matrices $W_{in}$ are inverse economic distances and $\lambda_{0n}=\left(\lambda_{01n},\ldots,\lambda_{0pn}\right)'$ and $\beta_{0n}$ are unknown parameter vectors. Subscripting with $n$ permits treatment of triangular arrays, an important issue for spatial models in general (see \cite{Robinson2011}), and for SAR models even more so due to various normalizations of the $W_{in}$ that make it $n$-dependent. This paper justifies computationally straightforward estimation for the parameters of (\ref{hosar}) with the same asymptotic properties as pseudo maximum likelihood estimates.

 SAR models allow dependence to occur across a very generalized notion of space: so long as a mapping exists between every pair of individuals to the real line a spatial weight matrix may be constructed. The flexible nature of the SAR model means that it may be used to model a very wide range of phenomena. Thus it has found application in many fields of economics such as development economics (\cite{case1991spatial}, \cite{Helmers2014}), industrial organization (\cite{pinkse2002}), trade (\cite{Conley2003}) and peer effects (\cite{Hsieh2018}), to name only a few examples. Another frequently used approach to model cross-sectional dependence is the `common factor' technique, see e.g. \cite{Chudik2015} for a review.

Estimation of SAR models has long been considered in the regional science literature, see e.g. \cite{anselin1988spatial}. Rigorous asymptotic theory for instrumental variables (IV) estimation was initially provided by \cite{kelejian1998generalized}, leading to the present flourishing theoretical literature. \cite{lee2002consistency} studied ordinary least squares (OLS) estimation of SAR models, stressing the need for lack of sparsity in the spatial weight matrix to establish desirable asymptotic properties such as consistency and efficiency. This was followed by \cite{lee2004asymptotic}, a seminal contribution that provided a taxonomical asymptotic theory for Gaussian pseudo maximum likelihood estimates (PMLE) of SAR models. Recently \cite{Kuersteiner2013,Kuersteiner2020} have provided general theory for such models in a panel data setting.

The flexible nature of SAR modelling is further embellished by the seamless ability to integrate more than one spatial weight matrix in the model (\ref{hosar}), thus permitting simultaneous connections between units across a number of channels. This is an accurate representation of typical economic situations, e.g. countries are `connected' by both geographical proximity as well as trade ties. Furthermore, in many economic settings the sample partitions naturally into $p$ clusters or groups, leading to block diagonal structure for the spatial weight matrix $V_n=diag\left(V_{1n},\ldots,V_{pn}\right)$, where $V_{in}$ is $m_i\times m_i$ and $\sum_{i=1}^p m_i=n$. To permit the modelling of heterogenous spillover effects across clusters, one may take $W_{in}$ to be the $n\times n$ block diagonal matrix with the $m_i\times m_i$ dimensional $i$-th diagonal block given by $V_{in}$. This approach has been suggested by \cite{Gupta2013,Gupta2018}. Here and more generally, the specification (\ref{hosar}) is termed a `higher-order' SAR model if $p>1$, see e.g \cite{blommestein1983specification}, \cite{Lee2010}, \cite{Li2017}, \cite{Han2017}, \cite{Kwok2019}. 

In the study of higher-order SAR models, \cite{Gupta2013,Gupta2018} have suggested that $p,k$ be allowed to diverge slowly to infinity as functions of sample size. The motivation for such generality is typically threefold: first, it is desirable to permit a richer model as the sample size permits. Second, clustered data as mentioned in the previous paragraph naturally imply asymptotic regimes with increasing $p$. For instance, when $m_i=m$ for each $i=1,\ldots,p$, we have $n=mp$ and the results of \cite{lee2004asymptotic} imply that $p\rightarrow\infty$ is necessary for consistent estimation, analogous to the problems created in the spatial statistics literature by `infill asymptotics', see e.g. \cite{Lahiri1996}. Finally, a theory that allows the model dimension to grow with sample size provides a more incisive analysis of large models in practice, much as typical asymptotic theory with a fixed parameter space itself can be thought as providing an approximation in finite samples. 

The estimation of such increasing-order SAR models has been studied by \cite{Gupta2013,Gupta2018} using IV, OLS and PMLE approaches. The first two methods have the advantage of being in closed-form, while even for $p=1$ PMLE (in)famously requires grid search and the inversion of an $n\times n$ matrix in every iteration, leading to many ingenious solutions for faster computation, see e.g.  \cite{Ord1975} and \cite{Pace1997}. The computational cost of PMLE in SAR type models is particularly salient as data sets increase in size, as stressed by \cite{Zhu2020}. Modern network data sets are amenable to modelling via SAR techniques and can feature, or accommodate, large parameter spaces but computation remains a serious challenge. \cite{Han2020} provide a discussion of the problems and propose a Bayesian solution.

These problems are naturally exacerbated if $p>1$, with grid search requiring more iterations to converge and each iteration requiring inversion of an $n\times n$ matrix, as well as risk of convergence to local optima. Furthermore, the requirement of a compact parameter space for $\lambda_{0n}$ can severely restrict the admissible parameter values (see \cite{Gupta2018}). On the other hand, under Gaussianity the PMLE becomes the MLE and is efficient. This property is shared by OLS, but under rather delicate and specific conditions even for $p=1$ (see \cite{lee2002consistency}). Thus the IV/OLS and PMLE approaches each have their advantages and it is desirable to combine the positive properties of both.

One method of obtaining closed-form estimates with the same asymptotic covariance matrix as a target estimate is to use Newton-type iterations commencing from an initial consistent estimator that is straightforward to compute. The approach dates back at least to \cite{Fisher1925} and \cite{LeCam1956}. It enjoys the added attraction of avoiding a potentially complicated consistency proof for an implicitly defined estimate, as well as the compactness assumptions this typically entails. As a result, the technique has been used in a vast variety of settings, see e.g. \cite{Rothenberg1964} (simultaneous equations), \cite{Hartley1965} (nonlinear least squares), \cite{Janssen1985} ($M$-estimation),  \cite{Rothenberg1984} (generalized least squares), \cite{Hualde2011}, \cite{Kristensen2006}, \cite{Robinson2005a} (time series and adaptive estimation),  \cite{Andrews1997} (generalized method of moments), \cite{Kasahara2008}, \cite{Kristensen2017} (structural estimation), \cite{DeLuca2018} (generalized linear models) and \cite{Frazier2017} (efficient two-step estimation), to name just a few.

In this paper we use IV and OLS estimates as initial estimates to form a single Newton-step asymptotic approximation to the Gaussian PMLE with $p=p_n$ and $k=k_n$ allowed to diverge as functions of $\ninf$. The approach has been studied in the case of fixed-dimensional SAR models by \cite{robinson2010efficient} and \cite{Lee2013}, but the previous discussion hints at its particular usefulness when considering large models. One avoids grid search over a high-dimensional parameter space, compactness assumptions on this space and the inversion of large ($n\times n$) matrix for every search iteration, as well as various headaches related to convergence and local optima. When commencing from IV estimates, this leads to closed-form efficient estimates under Gaussianity. As suggested by the results of \cite{lee2002consistency} and \cite{Gupta2013}, commencing iteration from OLS preserves the efficiency property. However, we show that the Newton step approach cancels out certain terms of large stochastic order that allows for weaker rate conditions than those imposed in these papers. 

In a simulation study, we demonstrate that the Newton step can lead to much improved estimates in finite samples, both in terms of bias and efficiency. While a single step is sufficient to establish desirable asymptotic properties, in our simulation study we also explore the finite sample implications of additional Newton steps, reporting results with up to six iterations. We find large finite sample gains in both bias and mean squared error that are robust to heavy tailed error distributions.  We also observe fast convergence of iterations, which conforms to extant theoretical observations. The gains are particularly notable when the parameter space and sample size is large, a situation in which PMLE becomes computationally onerous. In a small illustration with real world data, we show that the estimates work well in practice and lead to more precise results.  

We collect some frequently used notation here for the convenience of the reader.
For a generic matrix $A$ denote $\left\Vert A\right\Vert=\left(\overline{\eta}\left(A'A\right)\right)^{\frac{1}{2}}$, with $\overline\eta(\cdot)$ and $\underline\eta(\cdot)$ denoting the largest and smallest eigenvalues, respectively, of  a symmetric positive semidefinite matrix. Note that if $A$ is a vector then $\left\Vert A\right\Vert$ is simply its Euclidean norm. Let $\left\Vert A\right\Vert_R$ denote the maximum absolute row sum norm of $A$. For any parameter $\tau$, function $f(\tau)$ and generic estimate $\check\tau$, we will write $\check f\equiv f\left(\check\tau\right)$. We denote true parameter values with $0$ subscript and suppress the argument for a quantity evaluated at a true parameter value, i.e. $f\left(\tau_0\right)\equiv f$.
\section{Approximations to Gaussian PMLE}\label{sec:newton}
The ($-2/n$ times) log pseudo Gaussian likelihood function for model (\ref{hosar}) at any admissible point $\theta=(\lambda',\beta')'$ is given by
\begin{equation}\label{likelihood}
{\mathcal{Q}_n\left(\theta,\sigma^2\right)}=\log{(2\pi\sigma^2)}-\frac{2}{n}\log {\left\vert{S_{n}\left(\lambda\right)}\right\vert}
 +  \frac{1}{n\sigma^2}\left(S_n(\lambda)y_n-X_n\beta\right)'\left(S_n(\lambda)y_n-X_n\beta\right),
\end{equation} 
where $S_n(\lambda)=I_n-\sum_{i=1}^{p_n} \lambda_{in}W_{in}$, with $I_n$ denoting the $n\times n$ identity matrix.  If $S_n$ is invertible, (\ref{hosar}) admits the reduced form $y_n=S_n^{-1}X_n\beta_{n}+S_n^{-1}u$, and we  define $R_n=A_n+B_n,$ where $A_n = (G_{1n}X_n\beta_{n},\ldots,G_{{p_n}n}X_n\beta_{n}),B_n = (G_{1n}u,\ldots,G_{{p_n}n}u),$
$G_{in}(\lambda)=W_{in}S_n^{-1}(\lambda)$, $i=1,\ldots,p_n$, and so $R_n=(W_{1n}y_n,\ldots,W_{{p_n}n}y_n)$. 

Defining $\mathcal{R}_n^{y}\left(\theta\right)=R_n\lambda_n+X_n\beta_n-y_n$, the derivative of (\ref{likelihood}) at any admissible $\left(\theta,\sigma^2\right)$ is
\begin{equation}\label{xiany}
\xi_n\left(\theta,\sigma^2\right)=\left(\varphi_{n} {'}(\theta,\sigma^2),\;{2}{\sigma^{-2}}\ninv  \mathcal{R}_n^{y}{'}\left(\theta\right)X_n\right)',
\end{equation} 
where
$
 \varphi_n \left(\theta,\sigma^2\right)=2\sigma^{-2}\ninv \left(\sigma^{2}tr G_{1n}(\lambda)+y_n'W_{1n}'\mathcal{R}_n^{y}\left(\theta\right),\ldots,\sigma^{2}tr G_{p_nn}(\lambda)+y_n'W_{p_nn}'\mathcal{R}_n^{y}\left(\theta\right)\right)'.
$
Because $\mathcal{R}_n^{y}=-u$, denoting 
$
\phi_n   =  \sigma_0^{-2}\ninv \left( \sigma_0^2 tr C_{1n}-u'C_{1n}u ,\ldots,\sigma_0^2 tr C_{p_n}-u'C_{p_nn}u  \right)'$ with $C_{in}=G_{in}+G_{in}'$, we obtain
\begin{equation}\label{xidef}
\xi_n\equiv\frac{\partial \mathcal{Q}_n }{\partial\theta}=\left(\phi_n {'},0\right)'-{2}{\sigma_0^{-2}}t_n,
\end{equation}
with $t_{n}  = \ninv \left[A_n,X_n\right]'u$. The Hessian at any admissible point in the parameter space is

\begin{equation}\label{hessdef}
H_n\left(\theta_{n},\sigma^2\right)\equiv\frac{\partial^2\mathcal{Q}_n(\theta_{n},\sigma^2)}{\partial\theta\partial\theta'}=\left(\begin{array}{cc} \frac{2}{n} P_{ji,n}(\lambda_{n})+\frac{2}{n\sigma^2}R_n'R_n & \frac{2}{n\sigma^2}R_n'X_n \\ \\ \frac{2}{n\sigma^2}X_n'R_n & \frac{2}{n\sigma^2}X_n'X_n \end{array} \right)
\end{equation}
where $P_{ji,n}(\lambda_{n})$ is the $p_n\times p_n$ matrix with $(i,j)$-th element given by \linebreak
$tr \left(G_{jn}(\lambda_{n})G_{in}(\lambda_{n})\right)$.


 Let $Z_n$ be an $n\times r_n$ matrix of instruments, with $r_n\geq p_n$, and  define the IV and OLS estimates as 
\begin{eqnarray}
{\hat{\theta}}_{n} & = & \hat{Q}_n^{-1}\hat{K}_n'J_n^{-1}\hat{k}_n,\;\;\;\;\;\;\;\;{\hat{\sigma}}^2_{n}=\ninv \left\Vert y_n-\left(R_n,X_n\right)\hat{\theta}_{n} \right\Vert^2, \label{iv} \\
{\tilde{\theta}}_{n} & = & \hat{L}^{-1}_n\hat{l}_n ,\;\;\;\;\;\;\;\;\;\;\;\;\;\;\;\;\;\;\;\;{\tilde{\sigma}}^2_{n}=\ninv \left\Vert y_n-\left(R_n,X_n\right)\tilde{\theta}_{n} \right\Vert^2, \label{ols}
\end{eqnarray} 
respectively, with
$\hat{Q}_n=\hat{K}_n'J_n^{-1}\hat{K}_n$, 
$
\hat{K}_n=\ninv \left[Z_n, X_n\right]'[R_n,X_n],\hat{k}_n=\ninv \left[Z_n , X_n\right]'y_n,\;J_n=\ninv \left[Z_n ,X_n\right]'\left[Z_n,X_n\right],
$ and 
$
\hat{L}_n=\ninv \left[R_n, X_n\right]'[R_n,X_n],\hat{l}_n=\ninv \left[R_n,X_n\right]'y_n.
$
Define the respective `one-step' estimates $\hat{\hat{\theta}}_{n}$ and $\tilde{\tilde{\theta}}_{n}$ by the following equations
\begin{eqnarray}
\hat{\hat{\theta}}_{n} & = & \hat{\theta}_{n}-\hat{H}_n^{-1}\hat{\xi}_n, \label{ivstep} \\
\tilde{\tilde{\theta}}_{n} & = & \tilde{\theta}_{n}-\tilde{H}_n^{-1}\tilde{\xi}_n. \label{olsstep}
\end{eqnarray} 
We observe that other initial estimates, such as the GMM estimates of \cite{Kelejian1999} and \cite{Lee2007}, can also be used. However we choose initial estimates that are available in closed form for computational ease. While consistent initial estimates are needed to obtain a desirable asymptotic theory, even in the fixed-dimension parametric case these are permitted to be $n^\psi$-consistent, where $\psi<1/2$, see \cite{Robinson1988} and references therein.

While our theorems below establish desired asymptotic properties for the one step estimates, from a practical point of view more iterations may be desirable. In fact, these also improve the statistical rate of convergence to the target PMLE, yielding an even faster statistical counterpart to the famous quadratic numerical rate of convergence of Newton estimates, see for example Theorem 2 of \cite{Robinson1988} and p. 312-313 of \cite{Ortega1970}. We examine this issue in more detail in the next section and also the Monte Carlo study. 
\section{Asymptotic properties}\label{sec:asymptotics}
The following assumptions are discussed in \cite{lee2002consistency,lee2004asymptotic}, and \cite{Gupta2013,Gupta2018}, amongst other spatial papers in which they are routinely employed. These conditions are by no means the weakest possible set, but we opt for tractability to convey the main message especially in view of the large number of spatial parameters involved. For example, stochastic regressors can be easily accommodated but complicate the notation.
\begin{assumption}\label{errors}
$u=(u_{1},\ldots,u_{n})'$ has iid elements with zero mean and finite variance $\sigma_0^2$.
\end{assumption}
\begin{assumption}\label{weights}
For $i=1,\ldots,p_n$, the elements of $W_{in}$ are uniformly $\On\left(1/h_n\right)$, where $h_n$ is some positive sequence which may be bounded or divergent, but always bounded away from zero and such that $n/h_n\rightarrow\infty$ as $\ninf$. The diagonal elements of each $W_{in}$ are zero. 
\end{assumption}
\begin{assumption}\label{sn}
$S_n$ is non-singular for all sufficiently large $n$.
\end{assumption}

\begin{assumption}\label{rcsums}
$\left\Vert S_n^{-1} \right \Vert_R$, $\left\Vert S_n'^{-1} \right \Vert_R$, $\left\Vert W_{in} \right \Vert_R$ and $\left\Vert W'_{in} \right \Vert_R$ are uniformly bounded in $n$ and $i$ for all $i=1,\ldots,p_n$ and sufficiently large $n$.
\end{assumption}
\begin{assumption}\label{x} 
The elements of $X_n$ are constants and are uniformly bounded in $n$, in absolute value, for all sufficiently large $n$.
\end{assumption}
\begin{assumption}\label{z}
The elements of $Z_n$ are constants and are uniformly bounded in absolute value, for all sufficiently large $n$.
\end{assumption}
\begin{assumption}\label{K}
$\displaystyle\varlimsup_{n\rightarrow\infty}\overline{\eta}(J_n)<\infty$ and $\displaystyle\varliminf_{n\rightarrow\infty}\underline{\eta}(K_n'K_n)>0$.
\end{assumption}
\begin{assumption}\label{J}
$\displaystyle\varliminf_{n\rightarrow\infty}\underline{\eta}(J_n)>0$ and $\displaystyle\varlimsup_{n\rightarrow\infty}\overline{\eta}(K_n'K_n)<\infty$.
\end{assumption}
\begin{assumption}\label{l2}
$\displaystyle\varlimsup_{n\rightarrow\infty}\overline{\eta}(L_n)<\infty$.
\end{assumption}
\begin{assumption}\label{l1}
$\displaystyle\varliminf_{n\rightarrow\infty}\underline{\eta}(L_n)>0$.
\end{assumption}
\begin{assumption}\label{errors2}
$\mathbb{E}\left( u_{i}^{4}\right)\leq C$ for $i=1,\ldots,n$.
\end{assumption}
Let $\Psi_n$ be an $s{\times}(p_n+k_n)$ matrix of constants with full row-rank.
The claims of the following theorems also hold when $p_n$ and $k_n$ are fixed, but we state and prove the results for the more challenging case when these diverge. 
\begin{theorem}\label{approx}
\mbox{}
\begin{itemize}
\item[(i)] Let Assumptions \ref{errors}-\ref{errors2} hold along with
\begin{equation}\label{approx2sls1}
\frac{1}{p_n}+\frac{1}{r_n}+\frac{1}{k_n}+\frac{p_n^3k_n^4}{n}+\frac{p_n^{\frac{3}{2}}k_n}{h_n}\rightarrow 0  \text{ as } \ninf,
\end{equation}
and
\begin{equation}\label{approx2sls4}
\frac{r_n^2}{n} \text{ bounded as } \ninf.
\end{equation}
Then
\[
\frac{n^{\frac{1}{2}}}{\left(p_n+k_n\right)^{\frac{1}{2}}}\Psi_n\left(\hat{\hat{\theta}}_{n}-\theta_{0n}\right)\overset{d}\longrightarrow N\left(0,\lim_{\ninf}\frac{\sigma_0^2}{p_n+k_n}\Psi_n L_n^{-1}\Psi_n'\right),
\]
where the asymptotic covariance matrix exists, and is positive definite, by Assumptions \ref{l2} and \ref{l1}.
\item[(ii)] Let Assumptions \ref{errors}-\ref{x} and \ref{l2}-\ref{errors2} hold. Suppose also that
\begin{equation}\label{approxols1}
\frac{1}{p_n}+\frac{1}{k_n}+\frac{p_n^3k_n^4}{n}+\frac{p_n^{\frac{3}{2}}k_n}{h_n}+\frac{n^{\frac{1}{2}}p_n^{\frac{5}{2}}}{h_n^3}\rightarrow 0. 
\end{equation}

Then
\[
\frac{n^{\frac{1}{2}}}{\left(p_n+k_n\right)^{\frac{1}{2}}}\Psi_n\left(\tilde{\tilde{\theta}}_{n}-\theta_{0n}\right)\overset{d}\longrightarrow N\left(0,\lim_{\ninf}\frac{\sigma_0^2}{p_n+k_n}\Psi_n L_n^{-1}\Psi_n'\right),
\]
where the asymptotic covariance matrix exists, and is positive definite, by Assumptions \ref{l2} and \ref{l1}.
\end{itemize}

\end{theorem}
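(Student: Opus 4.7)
My plan is to linearize each Newton-step estimate via a mean value expansion of the score $\xi_n$ at the initial estimate around the truth, and then verify that the stated rate conditions make the remainder asymptotically negligible compared to the leading score term. Applying the mean value theorem componentwise to $\xi_n(\cdot,\cdot)$ between $\theta_{0n}$ and any initial estimate $f_n\in\{\hat\theta_n,\tilde\theta_n\}$ yields $\hat\xi_n=\xi_n+\bar H_n(f_n-\theta_{0n})$, where the rows of $\bar H_n$ are evaluated at possibly distinct intermediate points. Substituting this into (\ref{ivstep}) or (\ref{olsstep}) gives the decomposition
$$
f_n^{**}-\theta_{0n}=-\hat H_n^{*-1}\xi_n+\left(I-\hat H_n^{*-1}\bar H_n\right)(f_n-\theta_{0n}),
$$
where $f_n^{**}\in\{\hat{\hat\theta}_n,\tilde{\tilde\theta}_n\}$ and $\hat H_n^{*}\in\{\hat H_n,\tilde H_n\}$ correspond to the initial estimate. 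After multiplying by $n^{1/2}(p_n+k_n)^{-1/2}\Psi_n$, I would show that the leading score term produces the stated normal limit and the remainder is $\op(1)$.

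For the leading term, I would first establish $\|\hat H_n^{*}-H_n^{\dagger}\|\to 0$ in an appropriate operator norm, for a deterministic analogue $H_n^{\dagger}$, under Assumptions \ref{errors}--\ref{rcsums} together with (\ref{approx2sls1}) or (\ref{approxols1}). This relies on $\ninv R_n'R_n$ concentrating around its expectation via trace bounds $tr(G_{in}G_{jn})=\On(n/h_n)$ from Assumption \ref{rcsums}, with $P_{ji,n}$ of corresponding order. The Gaussian information equality identifies $\sigma_0^2\Psi_n H_n^{\dagger -1}\Psi_n'$ with $\sigma_0^2\Psi_n L_n^{-1}\Psi_n'$, matching the stated variance. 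Asymptotic normality of $n^{1/2}(p_n+k_n)^{-1/2}\Psi_n H_n^{\dagger -1}\xi_n$ then follows from a CLT for linear-quadratic forms in iid innovations with diverging dimension (of the type used in \cite{Gupta2013,Gupta2018}), applied to the decomposition of $\xi_n$ in (\ref{xidef}) into the linear form $-2\sigma_0^{-2}t_n$ and the centered quadratic form $\phi_n$, whose fourth moments are controlled via Assumption \ref{errors2}.

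For the remainder, the initial estimator rates are crucial. For part (i), standard IV theory under Assumptions \ref{K}--\ref{J} and (\ref{approx2sls4}) delivers $\|\hat\theta_n-\theta_{0n}\|=\Op((p_n+k_n)^{1/2}n^{-1/2})$; combined with Lipschitz-type bounds on $\hat H_n-\bar H_n$ in $\theta$, the remainder is absorbed under $p_n^3 k_n^4/n\to 0$ in (\ref{approx2sls1}). For part (ii), OLS has entrywise bias of order $1/h_n$ due to endogeneity (cf. \cite{lee2002consistency}, \cite{Gupta2013}), so $\|\tilde\theta_n-\theta_{0n}\|=\Op(p_n^{1/2}/h_n)+\Op(p_n^{1/2}n^{-1/2})$. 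The Newton correction cancels the leading $1/h_n$ bias, leaving a second-order residual in $\tilde\theta_n-\theta_{0n}$ whose propagation across $p_n$ coordinates produces the additional condition $n^{1/2}p_n^{5/2}/h_n^3\to 0$ in (\ref{approxols1}); this cancellation is the source of the efficiency gain over direct OLS flagged in the introduction.

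The principal obstacle is remainder control in part (ii): the OLS bias of order $1/h_n$ is not in itself vanishing fast enough to make $n^{1/2}\Psi_n(\tilde\theta_n-\theta_{0n})$ bounded in probability, and only becomes negligible through delicate second-order cancellation with $\tilde H_n^{-1}\tilde\xi_n$. Making this cancellation rigorous with $p_n$ diverging requires bounds on higher-order traces such as $|tr(G_{in}G_{jn}G_{mn})|=\On(n/h_n^2)$ (from Assumption \ref{rcsums}) together with combinatorial accounting over $p_n^3$ index triples, which is precisely what produces the $p_n^{5/2}$ factor in (\ref{approxols1}).
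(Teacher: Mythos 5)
Your overall architecture coincides with the paper's: the mean-value decomposition $f_n^{**}-\theta_{0n}=\hat H_n^{*-1}\bigl(\hat H_n^{*}-\overline H_n\bigr)(f_n-\theta_{0n})-\hat H_n^{*-1}\xi_n$, negligibility of the first term via rates for the Hessian difference and the initial estimate, and a CLT for the score term after replacing $\hat H_n^{*-1}$ by (a multiple of) $L_n^{-1}$. However, two of your key supporting claims are not correct as stated. First, the variance identification does not come from a ``Gaussian information equality'' (the errors are not Gaussian), nor from a joint linear--quadratic CLT in which $\phi_n$ genuinely contributes: the reason the limit is $\sigma_0^2 L_n^{-1}$ rather than the sandwich $2\Xi_n^{-1}+\Xi_n^{-1}\Omega_n\Xi_n^{-1}$ of Theorem \ref{approxbdd} is that (\ref{approx2sls1}) (resp.\ (\ref{approxols1})) forces $h_n\rightarrow\infty$, whence $\left\Vert\phi_n\right\Vert=\Op\bigl(p_n^{1/2}/(n^{1/2}h_n^{1/2})\bigr)$ makes the quadratic-form part of the score negligible after the $n^{1/2}(p_n+k_n)^{-1/2}\Psi_n$ scaling (it is $\Op\bigl((p_n/h_n)^{1/2}\bigr)$), while $\left\Vert L_n-(\sigma_0^2/2)\Xi_n\right\Vert=\On(p_n/h_n)\rightarrow 0$ aligns the Hessian with $L_n$; only the linear form $t_n$ drives the limit, via the CLT argument of \cite{Gupta2013}. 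Without making the role of divergent $h_n$ explicit, your sketch cannot explain why the stated covariance is valid here but fails when $h_n$ is bounded.

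Second, your derivation of the condition $n^{1/2}p_n^{5/2}/h_n^3\rightarrow 0$ in part (ii) rests on the bound $\left\vert tr(G_{in}G_{jn}G_{mn})\right\vert=\On(n/h_n^2)$, which does not follow from Assumptions \ref{weights} and \ref{rcsums}: with elements uniformly $\On(1/h_n)$ and bounded row and column sums the available (and sharp) bound is $\On(n/h_n)$; in the farmer-district example $tr(G^3)$ is of exact order $n/h_n$. The correct accounting is the one your own decomposition supplies: the remainder is $n^{1/2}\left\Vert\tilde H_n-\overline H_n\right\Vert\left\Vert\tilde\theta_n-\theta_{0n}\right\Vert$ up to bounded factors, where the MVT applied to $tr\bigl(G_{jn}(\lambda)G_{in}(\lambda)\bigr)$ gives a gradient of norm $\On(p_n^{1/2}n/h_n)$ (triple traces of order $n/h_n$), which combined with $\left\Vert\tilde\lambda_n-\lambda_{0n}\right\Vert=\Op(p_n^{1/2}/h_n)$ yields the $p_n^2/h_n^2$ term in Lemma \ref{hessianols}; multiplying by the further factor $\left\Vert\tilde\theta_n-\theta_{0n}\right\Vert=\Op(p_n^{1/2}/h_n)$ and by $n^{1/2}$ produces $n^{1/2}p_n^{5/2}/h_n^3$. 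Thus two of the three powers of $h_n^{-1}$ are OLS-error factors, not a sharper trace bound. Relatedly, no ``delicate second-order cancellation'' between $\tilde\theta_n-\theta_{0n}$ and $\tilde H_n^{-1}\tilde\xi_n$ has to be engineered: the identity $I-\tilde H_n^{-1}\overline H_n=\tilde H_n^{-1}(\tilde H_n-\overline H_n)$ already confines the OLS bias to a product with a small Hessian difference, and the score at the truth is centered by the log-determinant trace, so it involves $\phi_n$ (order $p_n^{1/2}/(n^{1/2}h_n^{1/2})$) rather than $n^{-1}B_n'u$ (order $p_n^{1/2}/h_n$) --- that replacement is the entire source of the weaker rate condition.
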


In the `just identified' case $p_n=r_n$, condition (\ref{approx2sls4}) is implied by (\ref{approx2sls1}). Theorem \ref{approx} $(i)$ shows that the one-step estimate asymptotically achieves the efficiency bound noted by \cite{lee2002consistency}. On the other hand, Theorem \ref{approx} $(ii)$ yields the same distributional result as for the OLS estimate (Theorem 4.3 of \cite{Gupta2013}). This should come as no surprise since \cite{lee2002consistency} has already established the efficiency of OLS under suitable conditions. Nevertheless, Theorem \ref{approx} $(ii)$ imposes weaker conditions on the relative rates of $h_n$ and $n^{\frac{1}{2}}$ than those extant in the literature. 

Indeed, for their result, \cite{Gupta2013} assumed ${n^{\frac{1}{2}}p_n^{\frac{1}{2}}}/{h_n}\rightarrow 0$ as compared to our ${n^{\frac{1}{2}}p_n^{\frac{5}{2}}}/{h_n^3}\rightarrow 0$. The latter is a quantity of smaller order as $\left({n^{\frac{1}{2}}p_n^{\frac{5}{2}}}/{h_n^3}\right)/\left({n^{\frac{1}{2}}p_n^{\frac{1}{2}}}/{h_n}\right)=(p_n/h_n)^2\rightarrow 0$. For fixed $p_n$ and $k_n$, our asymptotic normality result relies only on $n^{\frac{1}{2}}/h_n^{3}\rightarrow 0$, as $\ninf$. This is a weaker requirement as compared to \cite{lee2002consistency}, who assumed $n^{\frac{1}{2}}/h_n\rightarrow 0$ as $\ninf$. The reason for these favourable outcomes is the cancellation of higher order terms when using the one-step approximation. The key difference is in the rates $\left\Arrowvert n^{-1}\left[B_n, 0\right]'u\right\Arrowvert= \Op\left({p_n^{\frac{1}{2}}}/{h_n}\right)$
and $\left\Arrowvert\phi_n\right\Arrowvert=\Op\left({p_n^{\frac{1}{2}}}/{n^{\frac{1}{2}}h_n^{\frac{1}{2}}}\right),$
the latter being sharper since $n/h_n\rightarrow\infty$ as $\ninf$. 

To more transparently illustrate the implications of our weaker rate conditions, consider data collected in a `farmer-district' type of environment, such as in \cite{case1991spatial}. Suppose that there are $D$ districts, each containing $m$ farmers, so that $n=Dm$, and $D,m\rightarrow\infty$ simultaneously. There is independence across districts, but equal dependence within districts, yielding $h_n=m-1$ (see \cite{lee2002consistency,lee2004asymptotic} for a more detailed discussion). Then, with fixed $p_n$, \cite{lee2002consistency} and \cite{Gupta2013} required $n^{\frac{1}{2}}/h_n=D^{\frac{1}{2}}/m^{\frac{1}{2}}=o(1)$, while our condition imposes $n^{\frac{1}{2}}/h_n^3=D^{\frac{1}{2}}/m^{\frac{5}{2}}=o(1)$. Thus, our condition permits $D$ to grow much faster as we only need $D^{\frac{1}{5}}=o(m)$ as compared to $D=o(m)$. The author thanks an anonymous referee for suggesting this illustration. We note that \cite{robinson2010efficient} obtained asymptotic normality, indeed efficiency, in a semiparametric setup with $p_n=1$ requiring only $h_n\rightarrow\infty$ if the disturbances are symmetrically distributed or the weight matrix is symmetric. This condition would likely need to be suitably amended as $p_n\rightarrow\infty$.

If $h_n$ is bounded as $\ninf$, a more complicated analysis is required to establish that one-step estimates achieve the PMLE asymptotic covariance matrix, because the information equality does not hold asymptotically. Denote $\mu_l=\mathbb{E}\left(u_i^l\right)$ for natural numbers $l$, and introduce, with $i,j=1,\ldots,p_n$, the $p_n \times p_n$ matrix $\Omega_{\lambda\lambda,n}$ with $(i,j)$-th element
$
\frac{4\mu_3}{n\sigma_0^4}\sum_{r=1}^n c_{rr,in}b_{r,jn}X_n\beta_{0n}+\frac{\left(\mu_4-3\sigma_0^4\right)}{n\sigma_0^4}\sum_{r=1}^n c_{rr,in}c_{rr,jn}
$
and the $k_n \times p_n$ matrix $\Omega_{\lambda\beta,n}$ with $i$-th column
$
\frac{2\mu_3}{n\sigma_0^4}\sum_{r=1}^n c_{rr,in}x_{r,n}
$
where $c_{pq,in}$ is the $(p,q)$-th element of $C_{in}$, $b_{jn}=G_{jn}X_n\beta_{0n}$ with $t$-th element $b_{t,jn}$ ($j=1,\ldots,p_n$ and $t=1,\ldots,n$) and $x_{p,n}$ is the $p$-th column of $X_n'$. Define
\begin{equation}\label{omegadef}
\Omega_n=\left(\begin{array}{cc} \Omega_{\lambda\lambda,n} & \Omega_{\lambda\beta,n}' \\ \\ \Omega_{\lambda\beta,n} & 0 \end{array} \right).
\end{equation}
Then 
$
\mathbb{E}\left(\xi_n\xi_n'\right)=\ninv \left(2\Xi_n+\Omega_n\right),
$
where 
\begin{equation}\label{Xidef}
\Xi_n=\mathbb{E}\left(H_n\right)=\left(\begin{array}{cc} \frac{2}{n} \left(P_{ji,n}+P_{j'i,n}+\frac{1}{\sigma_0^2}A_n'A_n\right) & \frac{2}{n\sigma_0^2}A_n'X_n \\ \\ \frac{2}{n\sigma_0^2}X_n'A_n & \frac{2}{n\sigma_0^2}X_n'X_n \end{array} \right).
\end{equation}
When $h_n$ is bounded OLS cannot be consistent (see \cite{lee2002consistency}), so the following theorem considers only initial IV estimates.
\begin{theorem}\label{approxbdd}
Let Assumptions \ref{errors}-\ref{K} hold. Suppose that $h_n$ is bounded away from zero and that there is a real number $\delta>0$ such that
$
\mathbb{E}\left\arrowvert u_{i}\right\arrowvert^{4+\delta}\leq C 
$
for $i=1,\ldots,n$. In addition, assume that
\begin{equation}\label{hbddcovass}
\varlimsup_{\ninf}\overline{\eta}\left(2\Xi_n^{-1}+\Xi_n^{-1}\Omega_n\Xi_n^{-1}\right)<\infty,\;\varliminf_{\ninf}\underline{\eta}\left(2\Xi_n^{-1}+\Xi_n^{-1}\Omega_n\Xi_n^{-1}\right)>0\;\mathrm{and}\;\varliminf_{\ninf}\underline{\eta}\left(\Xi_n\right)>0.
\end{equation}
Suppose also that 
\begin{equation}\label{approx2sls3}
\frac{1}{p_n}+\frac{1}{r_n}+\frac{1}{k_n}+\frac{p_n^3k_n^2\left(p_n\left(r_n+k_n\right)+k_n^2\right)}{n}+\frac{\left(p_nk_n\right)^{2+\frac{8}{\delta}}}{n}\rightarrow 0 \text{ as } \ninf.
\end{equation}
Then
\[
\frac{n^{\frac{1}{2}}}{\left(p_n+k_n\right)^{\frac{1}{2}}}\Psi_n\left(\hat{\hat{\theta}}_{n}-\theta_{0n}\right)\overset{d}\longrightarrow N\left(0,\lim_{\ninf}\frac{\sigma_0^2}{p_n+k_n}\Psi_n \left(2\Xi_n^{-1}+\Xi_n^{-1}\Omega_n\Xi_n^{-1}\right)\Psi_n'\right),
\]
where the asymptotic covariance matrix exists, and is positive definite, by (\ref{hbddcovass}).
\end{theorem}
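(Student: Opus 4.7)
My plan is a Newton-step/Taylor expansion argument that must carry the non-Gaussian third and fourth moment contributions through, because with $h_n$ bounded the information equality fails asymptotically. Starting from (\ref{ivstep}), I apply a row-wise mean-value expansion of $\hat\xi_n$ about $\theta_{0n}$: $\hat\xi_n = \xi_n + \bar H_n(\hat\theta_n - \theta_{0n})$, where the rows of $\bar H_n$ are rows of $H_n(\cdot,\hat\sigma_n^2)$ evaluated at intermediate points between $\hat\theta_n$ and $\theta_{0n}$. Using $A^{-1}-B^{-1}=A^{-1}(B-A)B^{-1}$, this rearranges to
\[
\hat{\hat\theta}_n - \theta_{0n} \;=\; -\Xi_n^{-1}\xi_n \;+\; \Xi_n^{-1}(\hat H_n - \Xi_n)\hat H_n^{-1}\xi_n \;+\; \hat H_n^{-1}(\hat H_n - \bar H_n)(\hat\theta_n - \theta_{0n}).
\]
The aim is to show that, after premultiplication by $(n/(p_n+k_n))^{\frac{1}{2}}\Psi_n$, the two remainders are $\op(1)$, so that the asymptotics are driven entirely by $-\Psi_n\Xi_n^{-1}\xi_n$.

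For the leading term, (\ref{xidef}) exhibits $\xi_n$ as a vector of centred linear-quadratic forms in the iid errors $u$: the $\lambda$-components are $-(n\sigma_0^2)^{-1}(u'C_{in}u - \sigma_0^2\,\mathrm{tr}\,C_{in} + 2b_{in}'u)$ and the $\beta$-components are the pure linear form $-2(n\sigma_0^2)^{-1}X_n'u$. Consequently any fixed linear combination of entries of $\Psi_n\Xi_n^{-1}\xi_n$ is itself a centred linear-quadratic form, to which I apply a martingale-difference CLT in the spirit of Kelejian and Prucha's CLT for linear-quadratic forms. A direct calculation using $\mathrm{Var}(u'Au) = 2\sigma_0^4\,\mathrm{tr}\,A^2 + (\mu_4-3\sigma_0^4)\sum a_{rr}^2$ and $\mathrm{Cov}(u'Au, c'u) = \mu_3 \sum a_{rr}c_r$ gives $n\mathbb{E}[\xi_n\xi_n'] = 2\Xi_n + \Omega_n$, which combined with (\ref{hbddcovass}) yields the stated limit variance. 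The associated Lyapunov ratio involves $(4+\delta)$-th absolute moments of the summands, and the clause $(p_nk_n)^{2+\frac{8}{\delta}}/n\to 0$ of (\ref{approx2sls3}) is precisely what controls these via H\"older's inequality given the growing parameter dimension.

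For the remainders I proceed in three steps. First, establish the initial rate $\|\hat\theta_n - \theta_{0n}\| = \Op(((p_n+k_n)/n)^{\frac{1}{2}})$ from (\ref{iv}) using Assumptions \ref{errors}--\ref{K}: the IV score $\hat K_n'J_n^{-1}\ninv [Z_n,X_n]'u$ is a pure linear form in $u$ with variance of order $\ninv$, and no $h_n$-related restriction is needed for this step. Second, control $\|\hat H_n - \Xi_n\|$: splitting $R_n = A_n + B_n$ reduces the $\lambda\lambda$-block to bounding $\ninv (B_n'B_n - \mathbb{E} B_n'B_n)$ and $\ninv B_n'A_n$ via Assumptions \ref{weights}--\ref{rcsums}, together with $P_{ji,n}(\hat\lambda_n) - P_{ji,n}$ handled by a mean-value argument in $\lambda$ using the initial rate; the cross-blocks involving $\ninv R_n'X_n$ are simpler. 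The first remainder is thereby shown to be $\op(((p_n+k_n)/n)^{\frac{1}{2}})$ under $p_n^3k_n^2(p_n(r_n+k_n)+k_n^2)/n\to 0$. Third, the Hessian-Lipschitz bound $\|\hat H_n - \bar H_n\| = \Op(\|\hat\theta_n-\theta_{0n}\|\cdot(\text{uniform third-derivative bound}))$ times $\|\hat\theta_n-\theta_{0n}\|$ gives an $\Op((p_n+k_n)/n)$ term, of smaller order than the target.

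The principal obstacle is the CLT for the growing-dimensional linear-quadratic form $\Psi_n\Xi_n^{-1}\xi_n$ when $h_n$ is only bounded below: unlike in Theorem \ref{approx}(i), the quadratic part of $\xi_n$ is no longer negligible relative to the linear part, so one must track the full interaction between $\mu_3$, $\mu_4$ and the spatial trace structure, which is precisely what produces $\Omega_n$. Verifying the Lyapunov ratio under only $(4+\delta)$ moments while $p_n,k_n\to\infty$ is the technical bottleneck, motivating the delicate condition $(p_nk_n)^{2+\frac{8}{\delta}}/n\to 0$. A secondary difficulty is sharp operator-norm control of $\hat H_n-\Xi_n$: because the $\lambda\lambda$-block mixes the deterministic $P_{ji,n}$ contribution with the stochastic $R_n'R_n$ contribution (both of the same order when $h_n$ is bounded), the four blocks must be bounded separately and reassembled at the cost of a dimensional factor, which is what the polynomial-in-$p_n,k_n$ term of (\ref{approx2sls3}) ultimately budgets.
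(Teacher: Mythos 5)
Your overall strategy coincides with the paper's: the same Newton-step expansion (your middle term $\Xi_n^{-1}(\hat H_n-\Xi_n)\hat H_n^{-1}\xi_n$ is algebraically just $-(\hat H_n^{-1}-\Xi_n^{-1})\xi_n$, which is the paper's decomposition in (\ref{approxbddpre})), the recognition that with bounded $h_n$ the quadratic part $\phi_n$ of the score cannot be discarded and one must centre on $\Xi_n^{-1}$ rather than $L_n^{-1}$, the identity $n\mathbb{E}(\xi_n\xi_n')=2\Xi_n+\Omega_n$, and a linear--quadratic-form CLT for $-\tau_n\alpha'\Psi_n\Xi_n^{-1}\xi_n$ under $(4+\delta)$ moments. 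The only structural difference is that the paper does not reprove the CLT or the Hessian bounds: it invokes Theorem 3.4 of \cite{Gupta2018} for the distributional step and Lemmas \ref{hessian}, \ref{hessian2} and \ref{hesseig} for the remainders, whereas you sketch these from scratch; the content is the same.

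Two of your quantitative claims are, however, wrong as stated and need repair. First, the initial IV rate is $\Op\left(\left(r_n+k_n\right)^{\frac{1}{2}}/n^{\frac{1}{2}}\right)$ (Theorem 3.1 of \cite{Gupta2013}), not $\Op\left(\left(p_n+k_n\right)^{\frac{1}{2}}/n^{\frac{1}{2}}\right)$: the IV score $\hat K_n'J_n^{-1}\ninv[Z_n,X_n]'u$ has dimension $r_n+k_n$ and $r_n\geq p_n$ may grow faster than $p_n$. Second, and more importantly, your step three treats $\left\Vert\hat H_n-\overline H_n\right\Vert$ via a ``uniform third-derivative bound'' and concludes the remainder is $\Op\left(\left(p_n+k_n\right)/n\right)$, ``of smaller order than the target.'' That Lipschitz constant is not uniform in the dimension: the derivative of $tr\left(G_{jn}(\lambda)G_{in}(\lambda)\right)$ with respect to $\lambda$ has norm of order $p_n^{\frac{1}{2}}n/h_n$ and there are $p_n^2$ such entries, so (paper's Lemma \ref{hessian}) the dominant piece of $\left\Vert\hat H_n-\overline H_n\right\Vert$ is $\Op\left(p_n^{\frac{3}{2}}\left(r_n+k_n\right)^{\frac{1}{2}}/\left(n^{\frac{1}{2}}h_n\right)\right)$, which with $h_n$ merely bounded gives a scaled remainder of order $p_n^{\frac{3}{2}}\left(r_n+k_n\right)/n^{\frac{1}{2}}$. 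This term is not automatically negligible; it is precisely what the polynomial clause of (\ref{approx2sls3}) is there to control, so your argument survives once the bound is corrected, but the dimension-free heuristic you used would not justify it.
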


The rate condition (\ref{approx2sls3}) can simplify depending on the value of $\delta$, i.e. the order of the finite moments assumed for $u_i$. As $\delta$ grows larger, the last term in the rate condition becomes redundant, indeed the numerator therein tends to $p_n^2k_n^2$ as $\delta\rightarrow\infty$, which is evidently dominated by the numerator of the other rate restriction. In the `farmer-district' setting discussed earlier, we have bounded $h_n=m-1$ in this case. To further illustrate the rate condition, suppose that we are in the just identified case $p_n=r_n$. Then (\ref{approx2sls3}) requires $p_n^5k_n^3+p_n^4k_n^4+p_n^3k_n^5+\left(p_nk_n\right)^{2+\frac{8}{\delta}}=o(n)$. Then the term involving $\delta$ dominates the other three if $\delta\leq 8/3$. 

As indicated earlier, further iterations on the Newton step can improve the rate of statistical convergence to the target as well as finite sample properties. To see this, let $\hat{\hat\theta}^{\ell}_n$ be the $\ell$-th Newton iteration towards the PMLE $\check\theta_n$.
By  Theorem 2 of \cite{Robinson1988}, $\left\Vert\check\theta_n- \hat{\hat\theta}^{\ell+1}_n\right\Vert=\Op\left(\left\Vert\check\theta_n- \hat{\hat\theta}_n\right\Vert^{2^\ell}\right)$, an identical bound holding also for $\tilde{\tilde\theta}^{\ell+1}_n$. A factor that depends on $\ell$ is suppressed in the stated stochastic bound, indicating that this is not uniform in $\ell$. Because the results of \cite{Gupta2018} and this paper show that one-step Newton estimates and $\check\theta_n$ are $n^{1/2}/\left(p_n+k_n\right)^{1/2}$-consistent, we have
\[
\left\Vert\check\theta_n- \hat{\hat\theta}^{\ell+1}_n\right\Vert=\Op\left(\left(n/\left(p_n+k_n\right)\right)^{-2^{\ell-1}}\right),\;\;\;\;\;\left\Vert\check\theta_n- \tilde{\tilde\theta}^{\ell+1}_n\right\Vert=\Op\left(\left(n/\left(p_n+k_n\right)\right)^{-2^{\ell-1}}\right),
\] thus yielding the rate at which the iterations approximate the target estimate in a statistical sense, pointwise in $\ell$.


\section{Finite-sample performance of Newton-step estimates}\label{sec:mc}
\subsection{Fixed number of neighbours (bounded $h_n$)}\label{subsec:bddh}We examine finite-sample performance of $\hat{\hat{\theta}}_n$ in this section, since the IV case entails a change in limiting distribution due to the Newton step and OLS requires divergent $h_n$ to be consistent. Following \cite{das2003} and the design in \cite{Gupta2013}, define $W^*_{in}$ as the symmetric circulant matrix with first row
\begin{equation}\label{circulant1}
w^*_{1j,in}=\left\{\begin{array}{ll}
0 & \text{ if } j=1 \text{ or } j=i+2,\ldots,n-i ;\\
1 & \text{ if } j=2,\ldots,i+1 \text{ or } j=n-i+1,\ldots,n,\\
\end{array}\right.
\end{equation} 
and take $W^c_{in}={\left\Vert W^*_{in}\right\Vert^{-1}} W^*_{in},$
where $\left\Vert W^*_{in}\right\Vert=\overline{\eta}\left(W^*_{in}\right)=2i,$ because $W^*_{in}$ is a symmetric, circulant matrix (see e.g. \cite{Davis1979} p. 73). Thus $W^c_{in}$ is also a symmetric circulant matrix with first row given by $w^*_{1j,in}/2i$. This is an example of spatial weight matrices with bounded $h_n$. 

We now dispense with some $n$ subscripts for brevity. Our design generates $y=S^{-1}(X\beta+u)$ for sample sizes $n=200,400,800$ and $k=2$, with elements $x_{j1}$ and $x_{j2}$ of $X$ generated as iid replicates from a $U(0,1)$ distribution, $j=1,\ldots,n$. We generate the disturbance $u$ using two different distributions: $N(0,1)$ and $t_6$. PMLE becomes MLE under the first, while the second has heavier tails. Our experiments take $p=2,4,6$ for each of the described designs. We use a design with weights matrices given by $W^c_{i}$, $i=1,\ldots,p$. Finally, we set $\beta_1=1$, $\beta_2=0.5$ and $p=2:\lambda_1=0.4,\lambda_2=0.5;p=4: \lambda_1=0.3,\lambda_i=0.2,i=2,3,4;p=6: \lambda_i=0.15,i=1,\ldots,6.$ The choices of $\lambda_i$ satisfy the sufficient condition $\sum_{i=1}^p\left\vert\lambda_i\right\vert<1$ for invertibility of $S$.

With the aim of comparing initial IV estimates and MLE to Newton-step estimates, we first report three statistics: Monte Carlo  mean, Monte Carlo  mean squared error, and relative root Monte Carlo mean squared error, the latter being a straightforward ratio of the root MSE for IV and the iterated estimate. We also examine the use of more than one iteration in finite samples, and for this recall the notation $\hat{\hat\theta}^{\ell}_n$ for the $\ell$-th Newton iteration. Our results are reported for $\ell=1,3,6$. The set of instruments that we use for our initial estimates are the linearly independent columns of $Z=\left(W_{1}^cX,\ldots,W_p^cX,X\right)$.

In Tables \ref{table:norm_mean} and \ref{table:t6_mean}, we report the Monte Carlo mean of our estimates for standard normal and $t_6$ errors, respectively. For standard normal errors, we notice that the initial IV estimate can be heavily biased but Newton iterations improve matters, sometimes spectacularly. Indeed, for $p=6$ and $n=200$ the performance of $\hat\theta_n$ can be appalling, with $\hat\lambda_5<0$. However after six Newton steps this has improved to 0.1216 and even three iterations lead to a significant improvement. The reduction of bias from Newton iterations is not a universal feature, however broadly speaking the Newton steps reduce bias in the estimates, even for smaller values of $p$. As the sample size increases the iterations converge substantially, with little to choose typically between $\hat{\hat\theta}^3_n$ and $\hat{\hat\theta}^6_n$ for $n=800$. However for $n<800$, we notice that three iterations usually do the job quite satisfactorily, especially when $p<6$.

For $t_6$ errors, Table \ref{table:t6_mean} paints a similar picture to Table \ref{table:norm_mean}. Once again, the noticeable `rogue' estimate is for $\lambda_5$ when $p=6$ and $n=200$. Considering that all our simulations start from the same seed, this outlier may possibly be attributed to a bad draw. As in the normal errors case, results are quite stable for larger $n$ and smaller $p$, and typically show bias reduction due to Newton steps and near convergence after three iterations. 

Tables \ref{table:norm_rmse} and \ref{table:t6_rmse} report mean squared error (MSE) for the IV estimates and iterated estimates with $N(0,1)$ and $t_6$ errors, respectively. As may be expected, MSE is very high for designs that combine the largest values of $p$ with the smallest values of $n$. The efficiency improvement due to the Newton step is apparent, with iterations leading to very clear improvements (i.e. reductions) in MSE. These gains can be spectacular in many cases, for example for the $\lambda_i$ estimates when $p=6$ and $n=800$. These patterns of improvement with iteration are similar for both error distributions but the magnitude of MSE is generally much larger for $t_6$ errors, which features heavier tails than the normal distribution.

In Tables \ref{table:normal_rrmse} and \ref{table:t6_rrmse}, we report the ratio of the Monte Carlo root mean squared error of $\hat\theta_n$ to that of $\hat{\hat\theta}^\ell_n$, $\ell=1,3,6$, abbreviating this quantity to RRMSE. An RRMSE of two indicates that the RMSE of the IV estimate is twice that of the Newton iteration it is being compared to. Our results in Table \ref{table:normal_rrmse} show that Newton iterations can lead to tremendous finite sample gains in MSE. These gains are present in 100\% of the cases considered, but are generally larger for the spatial parameters $\lambda_i$ than the regression parameters $\beta_i$. 

We discuss the spatial parameter estimates first. Note that for greater sample sizes we have greater MSE gains, often the gains more than doubling from $n=200$ to $n=800$, and sometimes even tripling. As observed for the means in Table \ref{table:norm_mean}, there is usually not much to choose from between the third and sixth iterations. With and $n=800$ we nearly always obtain Newton estimates with RMSE a quarter of that for IV, and occasionally even a fifth of the IV RMSE. In most cases three iterations are enough to achieve these superb gains.

These patterns for the $\lambda_i$ qualitatively repeat themselves when the errors are $t_6$, as seen in Table \ref{table:t6_rrmse}. In this case when $n=800$ we achieve RMSE improvements over IV of a factor of 2.15 always when three iterations are carried out, with factors of three commonly seen and one case with nearly a fourfold improvement. The factors of efficiency improvement that we observe in our results can dominate similar precedents in other settings. Indeed, the greatest relative root MSE improvement that \cite{Robinson2005a}  finds in his fractional time series setting is $\sqrt{1/0.23}=2.085$ (see Table 4 of that paper).  

Moving to the estimates of the regression parameters $\beta_1$ and $\beta_2$, in both Tables \ref{table:normal_rrmse} and \ref{table:t6_rrmse} we see almost universal improvement over IV. The exceptions are four cases out of a total of 54 in Table \ref{table:t6_rrmse}, for the $t_6$ case. These RMSE gains are not as spectacular as for the $\lambda_i$, but are generally noticeably large as both $n$ and $p$ increase. Indeed, for $n=800$ we observe that the RMSE for the IV estimate can sometimes be almost one and a half times are large as the Newton iterations when $p=6$ and $n=800$. For $n\geq 400$, IV performs worse than the Newton iterations almost uniformly (there are only two exceptions for $t_6$ errors) over both $\beta_1$ and $\beta_2$, the values of $p$, the number of iterations and the error distribution. Thus there is evidence of the usefulness of Newton iterations even for the regression parameters, albeit the gains are greater for the spatial parameters.

Finally, we also present the RRMSE of MLE (denoted $\mathring\theta_n$) to our proposed iterated estimates in Table \ref{table:normal_mlerrmse}  for $N(0,1)$ errors. Naturally, we anticipate MLE to outperform iterated IV estimates for smaller sample sizes and, because our iterations target the MLE limiting covariance matrix, a reasonable aim is to approach the RMSE of the MLE as $n$ grows larger. Indeed, we find that this is the case. Recall that our estimates are designed to approximate but not outperform MLE: the main focus of the paper is computational simplicity. Our estimates are available in closed form and can be computed much faster than those requiring grid search and inversion of an $n\times n$ matrix. Thus, approaching the MLE in RMSE as $n$ grows is an encouraging and desirable property of our estimates. Finally, we observe that the RMSE of $\mathring\beta_n$ is much closer to the iterated estimates than is the case for $\mathring\lambda_n$. For the latter, larger sample sizes are needed for the RRMSE to approach unity.
\subsection{Growing number of neighbours (divergent $h_n$)}\label{subsec:divh}
In this section we explore the performance of the Newton step estimates when the number of neighbours diverges with sample size, i.e. $h_n\rightarrow\infty$. This design, with diverging $h_n$, also allows us to study the performance of iterations on OLS starting values. For each $i=1,\ldots, p$, we generate a $n\times n$ matrix $W_{in}^*$ as $w^*_{rs,in}=\Phi \left(-d_{rs,i}\right) I\left(c_{rs,i}<n^{1/3}/100\right)$ if $r\neq s$, and $w^*_{rr,in}=0$, where $\Phi (\cdot )$ is the standard
normal cdf, $d_{rs,i}\sim$iid $U[-3,3]$, and $c_{rs,i}\sim$iid $U[0,1]$. This construction generates $W_{in}^*$ with approximately $n^{1/3}\%$ (up to closest integer) nonzero elements. These $W_{in}^*$ are then symmetrized and normalized by spectral norm to ensure stability, yielding the final set of $W_{in}$ that we employ. The remaining design details are as in the previous subsection. To conserve space, we report results only for $N(0,1)$ errors. 
 
 Tables \ref{table:hdiviv_norm_mean} and \ref{table:hdivols_norm_mean} display the Monte Carlo mean of $\hat\theta_n$, $\hat{\hat\theta}^{1}_n$,  $\hat{\hat\theta}^{3}_n$, $\tilde\theta_n$, $\tilde{\tilde\theta}^{1}_n$ and $\tilde{\tilde\theta}^{3}_n$. Convergence of iterations is achieved after three Newton steps, so we do not report the sixth iteration as in the previous subsection. In fact, convergence is practically fully achieved by just a single iteration with the IV starting values $\hat\theta_n$, as Table \ref{table:hdiviv_norm_mean} indicates. Examining Table \ref{table:hdivols_norm_mean} suggests that a third iteration has  more influence for OLS starting values, but modestly so. Tables \ref{table:hdiviv_norm_rmse} and \ref{table:hdivols_norm_rmse} report MSE for the same sets of estimates and we find a similar pattern: for IV starting values one iteration seems to do the job and reduces MSE. On the other hand, for OLS starting values the first iteration increases MSE but the third iteration reduces it, following which performance is stable and so we do not report further iterations.

In Table \ref{table:hdiv_norm_rrmse} we report RRMSE of the estimates studies above. We notice that IV estimates improve in MSE with a single Newton step, and subsequent iterations do not help much, because convergence is achieved. On the other hand, when starting with OLS values $\tilde\theta_n$, further iterations are beneficial and yield more efficient estimates. Convergence is completely achieved after three iterations in this case. We also find that Newton steps, whether they commence from $\hat\theta_n$ or $\tilde\theta_n$, give greater efficiency gains for the spatial parameters $\lambda_i$ rather than the regression coefficients $\beta_i$. This matches the results in the previous subsection. Because the $\lambda_i$ correspond to the potentially endogenous spatial lags $W_{in}y$, we might expect initial estimates of these to have greater potential for improvement compared to the $\beta_i$.
 \subsection{Heteroskedastic errors}\label{subsec:het}
In this design, we confirm the robustness of our findings to heteroskedasticity in the error distribution. We generate the errors using multiplicative heteroskedasticity via the regressors, and report only the bounded $h_n$ weight matrices of Section \ref{subsec:bddh} and designs with Gaussian errors to conserve space. Specifically, we employ a $N(0,h_{jn})$ distribution for the errors, where $h_{jn}=n\left(\sum_{r=1}^n\left(\left\vert x_{r1}\right\vert+\left\vert x_{r2}\right\vert\right)\right)^{-1}\left(\left\vert x_{j1}\right\vert+\left\vert x_{j2}\right\vert\right)$, see \cite{Liu2015} and also \cite{Lin2010}. Monte Carlo mean, mean squared error and RRMSE of IV estimates to iterated Newton step estimates are presented in Tables \ref{table:hetnorm_mean}-\ref{table:hetnormal_rrmse}. We find the same qualitative patterns as were observed for the homoskedastic designs presented earlier, with the `rogue' IV estimate for $\lambda_5$ appearing again because we start our simulations from the same seed. As far as quantitative results are concerned, the improvements due to the Newton step are generally smaller than the homoskedastic case but still substantial.

\section{Empirical illustration}\label{sec:empirical}
In this small empirical illustration we show that the Newton step estimates perform well in practice and can lead to more precise estimation. The example is based on \cite{kolympiris2011spatial} (KKM), and is also studied in \cite{Gupta2013}. KKM seek to model the venture capital funding (provided by venture capital firms (VCFs)) for dedicated biotechnology firms (DBFs) with a SAR model. The hypothesis is that the level of VC funding for a DBF increases with the number of VCFs located in close proximity. Denoting by $d_{lk}$ the distance in miles between the $l$-th and $k$-th DBFs, we estimate
\begin{equation}\label{KKMspec}
y=\sum_{i=1}^{p}\lambda_i W_{i}^{b}y+X\beta+U,
\end{equation}
 where $W_{i}^b$ is the (row-normalised) weight matrix having off-diagonal $(l,k)$-th element equal to 1 if $i-1< d_{lk}\leq i$, $i=1\ldots,p,$ and if $d_{lk}=0$ for $i=1$. Thus the matrices are based on each one of $p$ sequential 1-mile rings from the origin DBF. $y$ is the vector of natural logs of the amount of VC funding (million \$) received by each of $n=816$ DBFs.

We first focus on estimates of the main parameters of interest $\lambda_i$ in (\ref{KKMspec}). We estimate (\ref{KKMspec}) with $p=2,4,6$ using initial IV and the Newton-step estimates that we have justified theoretically. We only report the Newton-step for a single iteration as convergence is achieved. Like \cite{Gupta2013}, we find that only $\lambda_1$ and $\lambda_2$ are statistically significant at the 1\% level, and the magnitude of our parameter estimates is also close to their findings, with our results reported in Table \ref{table:empirical}. The table reports $t$ statistics in parentheses. In square brackets we report for each parameter estimate the ratio of IV standard error to  Newton-step standard error, and find that this difference can be as great as 12.53\%. Thus the iteration scheme we propose can lead to more accurate inference in practice as the estimates are more precise.

As far as the $\beta_i$ are concerned, our simulations generally show that the efficiency gains are smaller for these as compared to the $\lambda_i$. Table \ref{table:empirical_betas} reports standard error ratios and absolute t-statistics for exclusion tests and confirms this. Indeed, all standard error ratios are very close to unity and the t-statistics are practically identical. We note that our proposed iteration does not make the estimation precision of the $\beta_i$ worse and improves the estimation precision of the $\lambda_i$, leading to an overall improvement in estimation quality. 

We give a very brief description of the explanatory variables in $X_n$  and refer the reader to KKM for details. The covariates include the number of proximate VCFs and DBFs to capture the effects of being in areas of high VCF or DBF concentration. Firm-specific characteristics include the distance from each DBF to its funding VCFs, , the average age of each funding
VCF, exposure of VCFs through
syndication and an indicator for foreign VCF investment. Variables controlling for DBF-specific factors include firm age, dummies for receiving a grant and being in an R\&D tax credit state, a cost of business index for the DBF's home state, distance to the closest university and the number of non-biotech establishments in the DBF's zip code. Two further variables recognize that additional factors can affect the cost
of doing business in ways that influence the VC funding levels of
a given DBF.

\clearpage

\setcounter{table}{0}
\renewcommand\thetable{\arabic{table}}

\begin{table}

\footnotesize

\hspace*{-0.6cm}

\caption{IV and single Newton-step estimate properties of $\beta_i$ in model (\ref{KKMspec}): standard error ratios and absolute values of t-statistics for $H_0:\beta_i=0$, $i=1,\ldots, 21$. $\beta_1$ is the intercept.}\label{table:empirical_betas}
\end{center}
\end{table}

\clearpage
\begin{subappendices}

\renewcommand{\setthesubsection}{\Alph{subsection}}
 \renewcommand{\thesubsection}{\Alph{subsection}}
\setcounter{lemma}{0}

 \renewcommand{\thelemma}{\Alph{subsection}.\arabic{lemma}}

\setcounter{equation}{0}
\renewcommand\theequation{\Alph{subsection}.\arabic{equation}}

\subsection{Proofs of theorems}\label{Newttheorems}
Write $a_n=p_n+k_n$, $b_n=r_n+k_n$, $c_n=p_nk_n^2+k_n$ and $\tau_n=n^{\frac{1}{2}}/a_n^{\frac{1}{2}}$
.\begin{proof}[Proof of Theorem \ref{approx}] 

\begin{itemize}
\item [(i)] By the mean value theorem (\ref{ivstep}) implies that
\begin{eqnarray}\label{ivappr}
\hat{\hat{\theta}}_{n}-\theta_{0n} & = & \left( I_{a_n}-\hat{H}_n^{-1}\overline{H}_n \right)\left(\hat{\theta}_{n}-\theta_{0n}\right)-\hat{H}_n^{-1}\xi_n\nonumber\\
& = &\hat{\theta}_{n}-\theta_{0n}-\hat{H}_n^{-1}\overline{H}_n \left(\hat{\theta}_{n}-\theta_{0n}\right)-\hat{H}_n^{-1}\xi_n\label{ivbias}
\end{eqnarray}
where $\overline{H}_n={\partial^2\mathcal{Q}_n(\overline{\theta}_{n},\hat{\sigma}_{n}^2)}/{\partial\theta\partial\theta'}$
and $\left\Arrowvert\overline{\theta}_{n}^{}-\theta_{0n}\right\Arrowvert\leq\left\Arrowvert\hat{\theta}_{n}-\theta_{0n}\right\Arrowvert$, with each row of the Hessian matrix evaluated at possibly different $\overline{\theta}_{n}$. The latter point is a technical comment that we take as given in the remainder of the paper whenever a mean-value theorem is applied to vector of values. For any $s{\times}1$ vector $\alpha$, we can use (\ref{ivappr}) to write 
\begin{eqnarray}\label{ivnew1}
\tau_n\alpha'\Psi_n\left(\hat{\hat{\theta}}_{n}-\theta_{0n}\right) & = & \tau_n\alpha'\Psi_n\hat{H}_n^{-1}\left(\hat{H}_n-\overline{H}_n \right)\left(\hat{\theta}_{n}-\theta_{0n}\right) \nonumber \\
& - & \tau_n\alpha'\Psi_n\hat{H}_n^{-1}\xi_n,
\end{eqnarray}
recalling that $\tau_n=n^{\frac{1}{2}}/a_n^{\frac{1}{2}}$. The first term on RHS above has modulus bounded by
$
\tau_n\left\Arrowvert\alpha\right\Arrowvert\left\Arrowvert\Psi_n\right\Arrowvert\left\Arrowvert\hat{H}_n^{-1}\right\Arrowvert\left\Arrowvert\hat{H}_n-\overline{H}_n\right\Arrowvert\left\Arrowvert\hat{\theta}_{n}-\theta_{0n}\right\Arrowvert,
$
where the second factor in norms is $\On\left(a_n^{\frac{1}{2}}\right)$, the third is bounded for sufficiently large $n$ by Lemma \ref{hesseig}, by Lemma \ref{hessian} the fourth is $\Op\left(\max\left\{{p_n^{\frac{3}{2}}b_n^{\frac{1}{2}}}/{n^{\frac{1}{2}}h_n},{b_n^{\frac{1}{2}}c_n^{\frac{1}{2}}}/{n},{p_n^{\frac{1}{2}}b_n^{\frac{1}{2}}}/{n^{\frac{1}{2}}h_n^{\frac{1}{2}}},{b_n}/{n}\right\}\right)$ and the fifth is $\Op\left({b_n^{\frac{1}{2}}}/{n^{\frac{1}{2}}}\right)$ by Theorem 3.1 of \cite{Gupta2013}.  We conclude that the first term on the RHS of (\ref{ivnew1}) is $\Op\left(\max\left\{{p_n^{\frac{3}{2}}}b/{n^{\frac{1}{2}}h_n},b_nc_n^{\frac{1}{2}}/n,p_n^{\frac{1}{2}}b/n^{\frac{1}{2}}h_n^{\frac{1}{2}},b_n^{\frac{3}{2}}/n\right\}\right),$
which is negligible by (\ref{approx2sls1}) and (\ref{approx2sls4}) because
$
\frac{p_n^3b_n^2}{nh_n^2}  \leq  C \left(\frac{p_n^3r_n^2+p_n^3k_n^2}{nh_n^2}\right)=\On\left(\frac{p_n^3}{h_n^2}\frac{r_n^2}{n}+\frac{p_n^3k_n^2}{nh_n^2}\right),
\frac{b_nc_n^{\frac{1}{2}}}{n}  \leq  C \left(\frac{r_np_n^{\frac{1}{2}}k_n+p_n^{\frac{1}{2}}k_n^2}{n^2}\right)
=\On\left(\frac{p_n^{\frac{1}{2}}k_n}{n^{\frac{1}{2}}}\frac{r_n}{n^{\frac{1}{2}}}+\frac{p_n^{\frac{1}{2}}k_n^2}{n^2}\right)$, $\frac{p_nb_n^2}{nh_n}\leq C\left(\frac{p_nr_n^2+p_nk_n^2}{nh_n}\right)=\On\left(\frac{r_n^2}{n}\frac{p_n}{h_n}+\frac{p_nk_n^2}{n}\frac{1}{h_n}\right)$, $\frac{b_n^3}{n^2}\leq C\left(\frac{r_n^3+k_n^3}{n^2}\right)$. For the negligibility of the last term note that $\frac{r_n^{\frac{3}{2}}}{n}=\frac{r_n^2}{n}{r_n^{-\frac{1}{2}}}$.

Thus we only need to find the asymptotic distribution of $-\tau_n\alpha'\Psi_n\hat{H}_n^{-1}\xi_n$. We can write
\begin{equation}\label{ivapprox6}
-\tau_n\alpha'\Psi_n\hat{H}_n^{-1}\xi_n=\frac{2}{\sigma_0^2}\tau_n\alpha'\Psi_n\hat{H}_n^{-1}t_n-\tau_n\alpha'\Psi_n\hat{H}_n^{-1}\phi_n.
\end{equation}

We have 
$
\mathbb{E}\left\Arrowvert\phi_n\right\Arrowvert^2  \leq  \sum_{i=1}^{p_n}\mathbb{E}\left(\ninv  tr C_{in}-{\ninv\sigma_0^{-2}}u'C_{in}u\right)^2 
 =  \sum_{i=1}^{p_n} \mathrm{var} \left(\ninv  u'C_{in}u\right)  = \On\left({p_n}/{nh_n}\right),
$
 (see (A.20) in the proof of Theorem 3.3 and Lemma B.2 in \cite{Gupta2018}), so that 
\begin{equation}\label{phibound}
\left\Arrowvert\phi_n\right\Arrowvert=\Op\left(\frac{p_n^{\frac{1}{2}}}{n^{\frac{1}{2}}h_n^{\frac{1}{2}}}\right).
\end{equation} 
Therefore the second term on the right of (\ref{ivapprox6}) has modulus bounded by $\tau_n$ times
\begin{equation}\label{ivapprox7}
\left\Arrowvert \alpha\right\Arrowvert\left\Arrowvert \Psi_n\right\Arrowvert \;\left\Arrowvert \hat{H}_n^{-1}\right\Arrowvert\;\left\Arrowvert \phi_n\right\Arrowvert,
\end{equation} 
where the second factor is $\On\left(a_n^{\frac{1}{2}}\right)$, the third is bounded for sufficiently large $n$ by Lemma \ref{hesseig} and the last is $\Op\left({p_n^{\frac{1}{2}}}/{n^{\frac{1}{2}}h_n^{\frac{1}{2}}}\right)$. Thus (\ref{ivapprox7}) is $\Op\left({p_n^{\frac{1}{2}}a_n^{\frac{1}{2}}}/{n^{\frac{1}{2}}h_n^{\frac{1}{2}}}\right)$ and the second term on the right of (\ref{ivapprox6}) is $\Op\left({p_n^{\frac{1}{2}}}/{h_n^{\frac{1}{2}}}\right)$ which is negligible by (\ref{approx2sls1}).
Then the asymptotic distribution required is that of
\begin{equation}\label{simpledist}
\frac{2}{\sigma_0^2}\tau_n\alpha'\Psi_n\hat{H}_n^{-1}t_n=\sum_{i=1}^3\Upsilon_{in}+\tau_n\alpha'\Psi_n L_n^{-1}t_n,
\end{equation}

$
\Upsilon_{1n}  =  \frac{2}{\sigma_0^2}\tau_n\alpha'\Psi_n\hat{H}_n^{-1}\left(\hat{H}_n-H_n\right)H_n^{-1}t_n,
\Upsilon_{2n}  =  \frac{2}{\sigma_0^2}\tau_n\alpha'\Psi_n\Xi_n^{-1}\left(H_n-\Xi_n\right)H_n^{-1}t_n,
\Upsilon_{3n}  =  \tau_n\alpha'\Psi_nL_n^{-1}\left(\frac{\sigma_0^2}{2}\Xi_n-L_n\right)\left(\frac{\sigma_0^2}{2}\Xi_n\right)^{-1}t_n.
$
We will demonstrate that $\left\vert \Upsilon_{in}\right\vert=\op(1)$, $i=1,2,3.$
First we observe that
$
\left\vert\Upsilon_{1n}\right\vert\leq \frac{2}{\sigma_0^2}\tau_n\left\Vert\alpha\right\Vert\left\Vert\Psi_n\right\Vert\left\Vert\hat{H}_n^{-1}\right\Vert\left\Vert\hat{H}_n-H_n\right\Vert\left\Vert H_n^{-1}\right\Vert\left\Vert t_n \right\Vert,
$
where the second factor in norms is $\On\left(a_n^{\frac{1}{2}}\right)$, the third and fifth are bounded for sufficiently large $n$ by Lemma \ref{hesseig}, the fourth is $\Op\left(\max\left\{{p_n^{\frac{3}{2}}b_n^{\frac{1}{2}}}/{n^{\frac{1}{2}}h_n},{b_n^{\frac{1}{2}}c_n^{\frac{1}{2}}}/{n},{p_n^{\frac{1}{2}}b_n^{\frac{1}{2}}}/{n^{\frac{1}{2}}h_n^{\frac{1}{2}}},{b_n}/{n}\right\}\right)$ by Lemma \ref{hessian}, and by (A.13) of \cite{Gupta2013} the last is $\Op\left({c_n^{\frac{1}{2}}}/{n^{\frac{1}{2}}}\right)$ . 

Then 
$
\left\vert\Upsilon_{1n}\right\vert=\Op\left(\max\left\{{p_n^{\frac{3}{2}}b_n^{\frac{1}{2}}c_n^{\frac{1}{2}}}/{n^{\frac{1}{2}}h_n},{b_n^{\frac{1}{2}}c_n}/{n},{p_n^{\frac{1}{2}}b_n^{\frac{1}{2}}c_n^{\frac{1}{2}}}/{n^{\frac{1}{2}}h_n^{\frac{1}{2}}},{b_nc_n^{\frac{1}{2}}}/{n}\right\}\right),
$
which is negligible by (\ref{approx2sls1}) and (\ref{approx2sls4}) because
$\frac{p_n^3b_nc_n}{{n}h_n^2}\leq C\left(\frac{p_n^4r_nk_n^2+p_n^4k_n^3}{nh_n^2}\right)=\On\left(\frac{r_n}{n^{\frac{1}{2}}}\frac{p_n^{\frac{3}{2}}k_n^2}{n^{\frac{1}{2}}}\frac{p_n^{\frac{5}{2}}}{h_n^2}+\frac{p_n^3k_n^3}{n}\frac{p_n}{h_n^2}\right)$, $\frac{b_nc_n^2}{n^2}\leq C\left(\frac{r_np_n^2k_n^4+p_n^2k_n^5}{n^2}\right)=\On\left(\frac{r_n}{n^{\frac{1}{2}}}\frac{p_n^2k_n^4}{n^{\frac{3}{2}}}+\frac{p_n^2k_n^4}{n}\frac{k_n}{n}\right)$, $\frac{p_nb_nc_n}{nh_n}\leq C\left(\frac{r_np_n^2k_n^2+p_n^2k_n^3}{nh_n}\right)=$ \\$\On\left(\frac{r_n}{n^{\frac{1}{2}}}\frac{p_n^{\frac{3}{2}}k_n^2}{n^{\frac{1}{2}}}\frac{p_n^{\frac{1}{2}}}{h_n}+\frac{p_n^2k_n^3}{nh_n}\right)$ and $b_nc_n^{\frac{1}{2}}/n=o(1)$ has been shown earlier.

Next $
\left\vert\Upsilon_{2n}\right\vert\leq {2}{\sigma_0^{-2}}\tau_n\left\Vert\alpha\right\Vert\left\Vert\Psi_n\right\Vert\left\Vert H_n^{-1}\right\Vert\left\Vert H_n-\Xi_n\right\Vert\left\Vert \Xi_n^{-1}\right\Vert \left\Vert t_n\right\Vert,
$
where the second factor in norms is $\On\left(a_n^{\frac{1}{2}}\right)$, the third and fifth are bounded for sufficiently large $n$ by Lemma \ref{hesseig}, the fourth is $\Op\left({p_nk_n}/{n^{\frac{1}{2}}}\right)$ by Lemma \ref{hessian2} and the last is $\Op\left({c_n^{\frac{1}{2}}}/{n^{\frac{1}{2}}}\right)$ as above. Then
$
\left\vert\Upsilon_{2n}\right\vert=\Op\left({p_nk_nc_n^{\frac{1}{2}}}/{n^{\frac{1}{2}}}\right)
$
which is negligible by (\ref{approx2sls1}) because ${p_n^2k_n^2c_n}/{n}\leq C{p_n^3k_n^4}/{n}.$
Similarly $\left\vert\Upsilon_{3n}\right\vert=\Op\left({p_nc_n^{\frac{1}{2}}}/{h_n}\right)$ by Lemma \ref{hessian2}, which is negligible by (\ref{approx2sls1}) because ${p_n^2c_n}/{h_n^2}\leq C{p_n^3k_n^2}/{h_n^2}.$ 
Then we only need to find the asymptotic distribution of the last term term in (\ref{simpledist}), but this is precisely the proof of Theorem 3.3 of \cite{Gupta2013}. Replicating those arguments leads to the theorem.
\item [(ii)] In view of Lemmas \ref{hessian2}, \ref{hessianols} and \ref{hesseigols}, the theorem is proved exactly like Theorem \ref{approx} $(i)$, except for different orders of magnitudes of various expressions. In this case two of the orders will be different from the analogous ones considered in the the proof of Theorem \ref{approx} $(i)$. Indeed, the analogue of the bound for the first term in (\ref{ivnew1}) is 
\begin{eqnarray*}
& & \Op\left(n^{\frac{1}{2}}\max\left\{\frac{p_n^{\frac{3}{2}}c_n^{\frac{1}{2}}}{n^{\frac{1}{2}}h_n},\frac{p_n^2}{h_n^2},\frac{p_n^{\frac{7}{4}}c_n^{\frac{1}{4}}}{n^{\frac{1}{4}}h_n^{\frac{3}{2}}},\frac{c_n}{n}\right\}\max\left\{\frac{c_n^{\frac{1}{2}}}{n^{\frac{1}{2}}},\frac{p_n^{\frac{1}{2}}}{h_n}\right\}\right)\\
& = & \Op\left(\max\left\{\pi_{1n},\pi_{2n},\pi_{3n},\pi_{4n},\pi_{5n},\pi_{6n}\right\}\right),
\end{eqnarray*}
where
$
 \pi_{1n}  = {p_n^{\frac{3}{2}}c_n}/{n^{\frac{1}{2}}h_n}$, $\pi_{2n}  = {p_n^2c_n^{\frac{1}{2}}}/{h_n^2}$, $\pi_{3n}  = {p_n^{\frac{7}{4}}c_n^{\frac{3}{4}}}/{n^{\frac{1}{4}}h_n^{\frac{3}{2}}}$,
$ \pi_{4n}  = {c_n^{\frac{3}{2}}}/{n}$, $\pi_{5n} = {n^{\frac{1}{2}}p_n^{\frac{5}{2}}}/{h_n^3}$, $\pi_{6n}  = {n^{\frac{1}{4}}p_n^{\frac{9}{4}}c_n^{\frac{1}{4}}}/{h_n^{\frac{5}{2}}}.
$ Now $\pi_{5n}$ is assumed to tend to zero by (\ref{approxols1}), while  the remaining $\pi_{in}$ terms are also negligible by (\ref{approxols1}) because  $\pi_{1n}^2 =\On\left(p_n^5k_n^4/nh_n^2\right)$, $\pi_{2n}^2 =\On\left(p_n^5k_n^2/h_n^4\right)$, $\pi_{3n}^4 =\On\left(\left(p_n^{4}k_n^3/n\right)\left(p_n^{6}k_n^3/h_n^6\right)\right)$, $\pi_{4n}^2 =\On\left(p_n^3k_n^6/n^2\right)$, $\pi_{6n}^2 =\On\left({n^{\frac{1}{2}}p_n^5k_n}/{h_n^{5}}\right)=$\\$\On\left(\pi_{5n}\left(p_n^{\frac{5}{2}}k_n/h_n^2\right)\right)$. 

 The analogue for the bound on $\Upsilon_{1n}$ is of order 
\begin{eqnarray*}
  &&\On\left(n^{\frac{1}{2}}\right)\Op\left(\max\left\{{p_n^{\frac{3}{2}}c_n^{\frac{1}{2}}}/{n^{\frac{1}{2}}h_n},{p_n^2}/{h_n^2},{p_n^{\frac{7}{4}}c_n^{\frac{1}{4}}}/{n^{\frac{1}{4}}h_n^{\frac{3}{2}}},{c_n}/{n}\right\}\right)\Op\left({c_n^{\frac{1}{2}}}/{n^{\frac{1}{2}}}\right)\\
 &=&  \Op\left(\max\left\{\pi_{1n},\pi_{2n},\pi_{3n},\pi_{4n}\right\}\right),
\end{eqnarray*}
which was shown to be negligible under the assumed conditions. All other bounds remain unchanged and will be also be negligible under (\ref{approxols1}), as in the proof of Theorem \ref{approx} $(i)$.

\end{itemize}
\end{proof}
\begin{proof}[Proof of Theorem \ref{approxbdd}]
Proceeding as in the proof of Theorem \ref{approx} (i), we can write
\begin{eqnarray}
\tau_n\alpha'\Psi_n\left(\hat{\hat{\theta}}_{n}-\theta_{0n}\right) & = & \tau_n\alpha'\Psi_n\hat{H}_n^{-1}\left(\hat{H}_n-\overline{H}_n \right)\left(\hat{\theta}_{n}-\theta_{0n}\right) \nonumber \\
& - & \tau_n\alpha'\Psi_n\left(\hat{H}_n^{-1}-\Xi_n^{-1}\right)\xi_n-\tau_n\alpha'\Psi_n\Xi_n^{-1}\xi_n\label{approxbddpre}.
\end{eqnarray}
 As in the proof of Theorem \ref{approx} (i), the first term on the RHS above is negligible by (\ref{approx2sls3}). Lemma \ref{hesseig} (for bounded $h_n$) indicates that the second term on the RHS of (\ref{approxbddpre}) is bounded in modulus by a constant times
$
\tau_n\left\Vert\Psi_n\right\Vert\left(\left\Vert t_n\right\Vert+\left\Vert\phi_n\right\Vert\right)\left(\left\Vert\hat{H}_n-H_n\right\Vert+\left\Vert{H}_n-\Xi_n\right\Vert\right)
$
which is 
$
\Op\left(n^{\frac{1}{2}}\max\left\{{c_n^{\frac{1}{2}}}/{n^{\frac{1}{2}}},{p_n^{\frac{1}{2}}}/{n^{\frac{1}{2}}h_n^{\frac{1}{2}}}\right\}\max\left\{{p_n^{\frac{3}{2}}b_n^{\frac{1}{2}}}/{n^{\frac{1}{2}}h_n},{b_n^{\frac{1}{2}}c_n^{\frac{1}{2}}}/{n},{p_n^{\frac{1}{2}}b_n^{\frac{1}{2}}}/{n^{\frac{1}{2}}h_n^{\frac{1}{2}}},{b_n}/{n},p_nk_n/n^{\frac{1}{2}}\right\}\right),
$
using (A.13) of \cite{Gupta2013}, (\ref{phibound}) and Lemmas \ref{hessian} and \ref{hessian2} (i). This is negligible by (\ref{approx2sls3}), in a similar way to the preceding proofs. Thus we need to establish the asymptotic distribution of $-\tau_n\alpha'\Psi_n\Xi_n^{-1}\xi_n$,
which is established under the assumed conditions in Theorem 3.4 of \cite{Gupta2018}.
\end{proof}
\subsection{Lemmas}\label{Newtlemmas}

In the subsequent lemmas the assumptions of the theorems that these are used to prove are taken to hold.

\begin{lemma}\label{A'B} (Lemma LS.4 of \cite{Gupta2018}, Supplementary Material) 
$\left\Vert B_n'A_n\right\Vert=\left\Vert A_n'B_n\right\Vert=\Op\left(n^{\frac{1}{2}}p_nk_n\right)$.
\end{lemma}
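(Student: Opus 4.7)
The plan is to exploit the fact that $B_n'A_n$ is a $p_n \times p_n$ matrix each of whose entries is a zero-mean linear form in $u$, and to bound the spectral norm by the Frobenius norm so that a simple second-moment computation suffices. Indeed, the $(i,j)$-th entry of $B_n'A_n$ equals $u' G_{in}' G_{jn} X_n \beta_{0n}$, which by Assumption \ref{errors} has mean zero and variance $\sigma_0^2 \|G_{in}' G_{jn} X_n \beta_{0n}\|^2$. Applying $\|B_n'A_n\|^2 \le \|B_n'A_n\|_F^2$ reduces the task to bounding $\sum_{i,j=1}^{p_n} \|G_{in}' G_{jn} X_n \beta_{0n}\|^2$ and invoking Markov's inequality.

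The deterministic step is then to show $\|G_{in}' G_{jn} X_n \beta_{0n}\|^2 = O(n k_n^2)$ uniformly in $i,j$. For this I would pass to the $\ell_\infty$ norm, using
\[
\|G_{in}' G_{jn} X_n \beta_{0n}\|_\infty \le \|G_{in}'\|_R \|G_{jn}\|_R \|X_n \beta_{0n}\|_\infty,
\]
and note that since $G_{in}=W_{in}S_n^{-1}$, Assumption \ref{rcsums} gives $\|G_{in}\|_R, \|G_{in}'\|_R = O(1)$ uniformly in $i$. Assumption \ref{x} combined with the standing boundedness of the components of $\beta_{0n}$ yields $\|X_n \beta_{0n}\|_\infty = O(k_n)$, so the product above is $O(k_n)$. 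Converting back to the Euclidean norm costs the standard factor of $n^{1/2}$, giving the claimed bound uniformly in $i,j$.

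Putting the pieces together, $\mathbb{E}\|B_n'A_n\|_F^2 \le \sigma_0^2 \sum_{i,j=1}^{p_n} \|G_{in}' G_{jn} X_n \beta_{0n}\|^2 = O(p_n^2 n k_n^2)$, so Markov's inequality delivers $\|B_n'A_n\| \le \|B_n'A_n\|_F = \Op(n^{1/2} p_n k_n)$. The bound on $\|A_n'B_n\|$ is immediate since $\|A'\|=\|A\|$. The only real obstacle is selecting the right deterministic bound in the middle step: one must resist the temptation to first bound $\|G_{in}' G_{jn}\|$ in spectral norm and then use $\|X_n\beta_{0n}\|^2 \le \overline\eta(X_n'X_n)\|\beta_{0n}\|^2$, because no assumption directly controls $\|\beta_{0n}\|$. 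The row-sum norm together with $\|X_n\beta_{0n}\|_\infty = O(k_n)$ sidesteps this and produces exactly the factor $k_n$ appearing in the statement.
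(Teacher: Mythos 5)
Your argument is correct, but it takes a genuinely different route from the paper, which offers no proof at all: the lemma is simply imported, with the parenthetical citation to Lemma LS.4 of the Supplementary Material of \cite{Gupta2018} serving as the entire justification. Your self-contained argument — writing the $(i,j)$-th entry of $B_n'A_n$ as the zero-mean linear form $u'G_{in}'G_{jn}X_n\beta_{0n}$, bounding the spectral norm by the Frobenius norm, computing second moments, and controlling the deterministic factor via row-sum norms (so that $\|G_{in}\|_R$ and $\|G_{in}'\|_R$ are $\On(1)$ uniformly by Assumption \ref{rcsums}) together with $\|X_n\beta_{0n}\|_\infty=\On(k_n)$ — is elementary, uses only Assumptions \ref{errors}, \ref{rcsums} and \ref{x}, and reproduces exactly the stated rate $\Op\left(n^{\frac{1}{2}}p_nk_n\right)$; what the citation buys is brevity, at the cost of sending the reader to another paper's supplement. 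The one caveat concerns your appeal to a ``standing'' uniform bound on the elements of $\beta_{0n}$: no such condition appears among Assumptions \ref{errors}--\ref{errors2} of this paper. Some restriction of this kind is unavoidable, since $A_n$ is linear in $\beta_{0n}$ and the claimed rate cannot hold with $\beta_{0n}$ unrestricted, and it is indeed part of the maintained conditions of the cited source, so your reconstruction is faithful — but you should state it explicitly as a hypothesis rather than treat it as standing. Your closing observation is apt: bounding $\|X_n\beta_{0n}\|$ by $\overline\eta(X_n'X_n)^{\frac{1}{2}}\|\beta_{0n}\|$ would require control of $\|\beta_{0n}\|$ that is likewise not assumed here, whereas the elementwise route produces precisely the factor $k_n$ appearing in the statement.
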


\begin{lemma}\label{X'B}(Lemma LS.4 of \cite{Gupta2018}, Supplementary Material)
$\left\Vert X_n'B_n\right\Vert=\left\Vert B_n'X_n\right\Vert=\Op\left(n^{\frac{1}{2}}p_n^{\frac{1}{2}}k_n^{\frac{1}{2}}\right)$.
\end{lemma}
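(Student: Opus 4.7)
The plan is to control the spectral norm by the Frobenius norm and then bound the second moment of each entry of $X_n'B_n$ in a routine way using Assumptions \ref{errors}, \ref{rcsums} and \ref{x}. Since $X_n'B_n$ is $k_n\times p_n$, the basic identity
\[
\left\Vert X_n'B_n\right\Vert^2\leq\left\Vert X_n'B_n\right\Vert_F^2=\sum_{i=1}^{p_n}\sum_{j=1}^{k_n}\bigl(x_{\cdot j,n}'G_{in}u\bigr)^2
\]
is the starting point, where $x_{\cdot j,n}$ denotes the $j$-th column of $X_n$. By Markov's inequality it then suffices to show that the expectation of the right-hand side is $\On\left(np_nk_n\right)$, which will yield $\left\Vert X_n'B_n\right\Vert_F=\Op\left(n^{\frac{1}{2}}p_n^{\frac{1}{2}}k_n^{\frac{1}{2}}\right)$ and hence the claim. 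The statement $\left\Vert B_n'X_n\right\Vert=\left\Vert X_n'B_n\right\Vert$ is immediate since spectral norm is invariant under transposition.

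The next step is to evaluate each summand in expectation. Using Assumption \ref{errors} (iid, mean-zero, variance $\sigma_0^2$ errors), for each fixed $i,j$ we have
\[
\mathbb{E}\bigl(x_{\cdot j,n}'G_{in}u\bigr)^2=\sigma_0^2\,x_{\cdot j,n}'G_{in}G_{in}'x_{\cdot j,n}\leq\sigma_0^2\,\left\Vert G_{in}\right\Vert^2\left\Vert x_{\cdot j,n}\right\Vert^2.
\]
Assumption \ref{x} gives $\left\Vert x_{\cdot j,n}\right\Vert^2=\On(n)$ uniformly in $j$. For $\left\Vert G_{in}\right\Vert$, I would use the standard bound $\left\Vert A\right\Vert^2\leq\left\Vert A\right\Vert_R\left\Vert A'\right\Vert_R$ and the submultiplicativity of the row-sum norm: since $G_{in}=W_{in}S_n^{-1}$, $\left\Vert G_{in}\right\Vert_R\leq\left\Vert W_{in}\right\Vert_R\left\Vert S_n^{-1}\right\Vert_R$, with an analogous bound for $\left\Vert G_{in}'\right\Vert_R$. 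Both factors are uniformly bounded by Assumption \ref{rcsums}, so $\left\Vert G_{in}\right\Vert=\On(1)$ uniformly in $i$ and $n$.

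Combining these gives $\mathbb{E}\bigl(x_{\cdot j,n}'G_{in}u\bigr)^2=\On(n)$ uniformly in $i,j$, and summing over the $p_nk_n$ pairs $(i,j)$ yields $\mathbb{E}\left\Vert X_n'B_n\right\Vert_F^2=\On(np_nk_n)$, from which the Markov step delivers the advertised rate. The only mild subtlety is confirming that the order constants in the bounds on $\left\Vert G_{in}\right\Vert$ and the elementwise bound on $X_n$ are indeed uniform in $i,j$ and $n$; this is exactly what the ``uniformly in $i$ and $n$'' wording in Assumptions \ref{weights}--\ref{x} guarantees. No eigenvalue-type argument on $n\times n$ matrices is needed because the Frobenius norm absorbs the rectangular structure and reduces the problem to scalar second moments.
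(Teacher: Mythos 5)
Your argument is correct and is essentially the standard one: the paper itself offers no proof here (it simply cites Lemma LS.4 of the supplementary material of \cite{Gupta2018}), and the cited result is proved by exactly this route --- bounding the spectral norm by the Frobenius norm, computing $\mathbb{E}\left(x_{\cdot j,n}'G_{in}u\right)^2=\sigma_0^2x_{\cdot j,n}'G_{in}G_{in}'x_{\cdot j,n}=\On(n)$ uniformly via $\left\Vert G_{in}\right\Vert^2\leq\left\Vert G_{in}\right\Vert_R\left\Vert G_{in}'\right\Vert_R$ and Assumptions \ref{errors}, \ref{rcsums} and \ref{x}, and applying Markov's inequality. All the uniformity claims you flag are indeed covered by the stated assumptions, so there is no gap.
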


\begin{lemma}\label{hessian} 
\mbox{}

$\left\Arrowvert\hat{H}_n-\overline{H}_n\right\Arrowvert$ and $\left\Arrowvert\hat{H}_n-H_n\right\Arrowvert$ are 
$\Op\left(\max\left\{{p_n^{\frac{3}{2}}b_n^{\frac{1}{2}}}/{n^{\frac{1}{2}}h_n},{b_n^{\frac{1}{2}}c_n^{\frac{1}{2}}}/{n},{p_n^{\frac{1}{2}}b_n^{\frac{1}{2}}}/{n^{\frac{1}{2}}h_n^{\frac{1}{2}}},{b_n}/{n}\right\}\right).$
\end{lemma}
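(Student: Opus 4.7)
The plan is to bound $\|\hat H_n - H_n\|$ directly and recover $\|\hat H_n - \overline H_n\|$ by triangle inequality, noting that $\overline\theta_n$ lies on the segment between $\hat\theta_n$ and $\theta_{0n}$ while both Hessians share the same second argument $\hat\sigma_n^2$. Examining (\ref{hessdef}), the difference $\hat H_n - H_n$ is driven by only two mechanisms: the trace block $\Delta P := P_{ji,n}(\hat\lambda_n)-P_{ji,n}(\lambda_{0n})$ occupying the $(1,1)$ position, and the scalar multiplier $\delta_n := \hat\sigma_n^{-2}-\sigma_0^{-2}$ standing in front of each data block $R_n'R_n/n$, $R_n'X_n/n$, $X_n'R_n/n$ and $X_n'X_n/n$. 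I will bound each of the four $2\times 2$ sub-blocks in Frobenius norm, then combine by additivity across blocks and use $\|M\|\le\|M\|_F$.

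For $\Delta P$, I would first apply an entrywise mean value expansion. Using $\partial S_n^{-1}(\lambda)/\partial\lambda_k = S_n^{-1}(\lambda)W_{kn}S_n^{-1}(\lambda)$ together with cyclicity of trace, $[\Delta P]_{ij}$ equals $\sum_{k=1}^{p_n} 2\,tr[G_{jn}(\tilde\lambda)G_{kn}(\tilde\lambda)G_{in}(\tilde\lambda)](\hat\lambda_{kn}-\lambda_{0kn})$ at some intermediate $\tilde\lambda$. Assumptions \ref{weights} and \ref{rcsums} give $|tr[G_{jn}(\tilde\lambda)G_{kn}(\tilde\lambda)G_{in}(\tilde\lambda)]| = \On(n/h_n)$ uniformly in $i,j,k$, provided $\tilde\lambda$ remains in a neighbourhood of $\lambda_{0n}$ where $S_n(\tilde\lambda)$ is invertible with row-sum-controlled inverse -- a condition holding with probability tending to one by the $\op(1)$-consistency of $\hat\lambda_n$ from Theorem 3.1 of \cite{Gupta2013}. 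Cauchy--Schwarz in $k$ and summation over $i,j$ then yields $\|\Delta P\|_F = \Op(p_n^{3/2}(n/h_n)\|\hat\lambda_n-\lambda_{0n}\|)$, which combined with $\|\hat\theta_n-\theta_{0n}\| = \Op(b_n^{1/2}/n^{1/2})$ and the $2/n$ prefactor delivers the first stated rate $p_n^{3/2}b_n^{1/2}/(n^{1/2}h_n)$.

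For the remaining three rates I would start from $\hat\sigma_n^2 - \sigma_0^2 = (\ninv u'u-\sigma_0^2) + 2\ninv u'[R_n,X_n](\theta_{0n}-\hat\theta_n) + \ninv \|[R_n,X_n](\theta_{0n}-\hat\theta_n)\|^2$ and combine the rates $\|\ninv B_n'u\| = \Op(p_n^{1/2}/h_n)$ (quoted in Section \ref{sec:asymptotics}), $\|\ninv A_n'u\| = \Op((p_n/n)^{1/2})$ and $\|\ninv X_n'u\| = \Op((k_n/n)^{1/2})$ with $\|\hat\theta_n-\theta_{0n}\| = \Op(b_n^{1/2}/n^{1/2})$; Lipschitz continuity of $x\mapsto 1/x$ at $\sigma_0^2>0$ then transfers the resulting rate to $\delta_n$. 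Multiplying $\delta_n$ by Frobenius-norm bounds on the four data blocks -- obtained by expanding $R_n=A_n+B_n$, invoking Lemmas \ref{A'B} and \ref{X'B} for the mean-zero cross products, and computing $\|A_n'A_n\|_F$, $\|A_n'X_n\|_F$, $\|X_n'X_n\|_F$ elementarily from Assumptions \ref{weights}--\ref{x} -- and maximising across all block combinations produces the remaining three rates $b_n^{1/2}c_n^{1/2}/n$, $p_n^{1/2}b_n^{1/2}/(n^{1/2}h_n^{1/2})$ and $b_n/n$.

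The main obstacle is the combinatorics in this last step: several block contributions compete, and identifying the dominant terms requires tracking each product of a $\delta_n$-component with a data-block Frobenius norm across all four blocks. The $p_n^{1/2}b_n^{1/2}/(n^{1/2}h_n^{1/2})$ rate is the most delicate because the factor $h_n^{1/2}$ (rather than $h_n$) emerges from sharpened variance bounds on mean-zero quadratic forms in $u$, in the same spirit as those underlying $\|\phi_n\|=\Op(p_n^{1/2}/(n^{1/2}h_n^{1/2}))$, rather than from pointwise elementwise bounds. Step 1 and the Lipschitz step for $\delta_n$ are, by contrast, comparatively routine.
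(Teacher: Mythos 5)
Your overall architecture matches the paper's: triangle inequality to reduce everything to $\left\Vert\hat H_n-H_n\right\Vert$ (and the analogous bound at $\overline\theta_n$), a split into the trace block $P_{ji,n}(\hat\lambda_n)-P_{ji,n}$ and the factor $\left\vert\hat\sigma_n^{-2}-\sigma_0^{-2}\right\vert$ multiplying the data blocks, and a mean-value expansion of the traces with Cauchy--Schwarz giving $\Op\bigl(p_n^{3/2}b_n^{1/2}/(n^{1/2}h_n)\bigr)$ for the first block; that part is essentially the paper's argument (the paper packages the trace-of-three-$G$'s bound as Lemma LS.3 of \cite{Gupta2018}, and the invertibility-in-a-neighbourhood caveat you raise is subsumed there).

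The genuine gap is in your treatment of the data blocks. To recover the stated rates you need $\ninv\left(\left\Vert R_n'R_n\right\Vert+2\left\Vert X_n'R_n\right\Vert+\left\Vert X_n'X_n\right\Vert\right)=\Op(1)$, and this is exactly what the paper extracts from Assumption \ref{l2}. Your plan to compute $\left\Vert A_n'A_n\right\Vert_F$, $\left\Vert A_n'X_n\right\Vert_F$ and $\left\Vert X_n'X_n\right\Vert_F$ ``elementarily from Assumptions \ref{weights}--\ref{x}'' cannot deliver this: those assumptions only give elementwise bounds, so $\left\Vert X_n'X_n\right\Vert_F=\On(nk_n)$, $\left\Vert A_n'X_n\right\Vert_F=\On\bigl(n(p_nk_n)^{1/2}k_n\bigr)$ and $\left\Vert A_n'A_n\right\Vert_F=\On\bigl(np_nk_n^2\bigr)$ (since elements of $X_n\beta_{0n}$ are only $\On(k_n)$), which inflate the second term of (\ref{hessdiff}) by polynomial factors in $p_n,k_n$ and do not yield the lemma; the Frobenius norm is the wrong tool here and the eigenvalue condition on $L_n$ is indispensable. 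Two secondary points: (i) your intermediate claim $\left\Vert \ninv A_n'u\right\Vert=\Op\bigl((p_n/n)^{1/2}\bigr)$ is unjustified for the same reason --- the correct rate carries the $k_n$ factor, consistent with $\left\Vert t_n\right\Vert=\Op\bigl(c_n^{1/2}/n^{1/2}\bigr)$ --- though since the target max contains $b_n^{1/2}c_n^{1/2}/n$ this would not change the conclusion; (ii) by re-deriving $\hat\sigma_n^2-\sigma_0^2$ from scratch you introduce the term $\ninv u'u-\sigma_0^2=\Op(n^{-1/2})$, which is not dominated by the three stated rates in all admissible regimes (e.g.\ fast $h_n$, slow $p_n,k_n,r_n$) and which the paper sidesteps by citing (A.7) of \cite{Gupta2013} directly, i.e.\ the bound (\ref{ivsigmainvbound}); relatedly, the $h_n^{1/2}$ in that bound does not come from sharpened variance bounds on centred quadratic forms (those give $(nh_n)^{-1/2}$, and the mean term gives $h_n^{-1}$), so the step you flag as most delicate is in fact slack for the upper bound you need.
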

\begin{proof}
By the triangle inequality
$
\left\Arrowvert\hat{H}_n-\overline{H}_n\right\Arrowvert \leq \left\Arrowvert\hat{H}_n-H_n\right\Arrowvert+\left\Arrowvert\overline{H}_n^{^{}}-H_n^{}\right\Arrowvert
$, and again by the triangle inequality $\left\Arrowvert\hat{H}_n-H_n\right\Arrowvert$ is bounded by
\begin{equation}
 \frac{2}{n} \left\Vert P_{ji,n}(\hat{\lambda}_{n})-P_{ji,n}\right\Vert  +  \frac{2}{n}\left\vert\frac{1}{\hat{\sigma}_{n}^2}-\frac{1}{\sigma_0^2}\right\vert\left(\left\Vert R_n'R_n\right\Vert + 2\left\Vert X_n'R_n\right\Vert+ \left\Vert X_n'X_n\right\Vert\right) \label{hessdiff}.
\end{equation}
The first term in (\ref{hessdiff}) is bounded by
\begin{equation}\label{hessianpmlex1}
\left\{\sum_{i,j=1}^{p_n} \left(\frac{2}{n} tr (G_{jn}(\hat{\lambda}_{n})G_{in}(\hat{\lambda}_{n})) - \frac{2}{n} tr (G_{jn}G_{in})\right)^2\right\}^{\frac{1}{2}}
\end{equation}
 By the MVT,  we have $tr (G_{jn}(\hat{\lambda}_{n})G_{in}(\hat{\lambda}_{n}))= tr (G_{jn}G_{in})+\overline{\overline{\zeta}}_{ij, n}'\left(\hat{\lambda}_{n}-\lambda_{0n}\right)$, where $\overline{\overline\zeta}_{ij,n}$ has elements $tr\left(G_{in}\left(\overline{\overline{\lambda}}_{n}\right)G_{sn}\left(\overline{\overline{\lambda}}_{n}\right)G_{jn}\left(\overline{\overline{\lambda}}_{n}\right)
+   G_{sn}\left(\overline{\overline{\lambda}}_{n}\right)G_{in}\left(\overline{\overline{\lambda}}_{n}\right)G_{jn}\left(\overline{\overline{\lambda}}_{n}\right)\right)$, $s=1,\ldots,p_n$, and $\left\Arrowvert\overline{\overline\lambda}_{n}^{}-\lambda_{0n}\right\Arrowvert\leq\left\Arrowvert{\hat\lambda}_{n}^{}-\lambda_{0n}\right\Arrowvert$. Thus, the summands in (\ref{hessianpmlex1}) are ${4}{n^{-2}}\;\left(\overline{\overline{\zeta}}_{ij,n}'\left({\hat\lambda}_{n}^{}-\lambda_{0n}\right)\right)^2 \leq  {4}{n^{-2}}\;\left\Vert\overline{\overline{\zeta}}_{ij,n}\right\Vert^2\left\Vert{\hat\lambda}_{n}^{}-\lambda_{0n}\right\Vert^2$
by Cauchy-Schwarz inequality, where the first factor in norms on the RHS is $\On\left({p_n{n^2}/{h_n^2}}\right)$ by Lemma LS.3 of \cite{Gupta2018}, supplementary material. The second factor is bounded by $\left\Arrowvert\hat{\theta}_{n}-\theta_{0n}\right\Arrowvert^2 =  \Op\left(b_n/n\right)$ (see (A.6) of \cite{Gupta2013}),
so we conclude that the summands in (\ref{hessianpmlex1}) are $\Op\left({ b_np_n}/{nh_n^2}\right)$ and therefore (\ref{hessianpmlex1}) is $\Op\left({p_n^{\frac{3}{2}}b_n^{\frac{1}{2}}}/{n^{\frac{1}{2}}h_n}\right)$ and it follows that so is the first term in (\ref{hessdiff}).
By (A.7) of \cite{Gupta2013}, 
\begin{equation}\label{ivsigmainvbound}
\left\vert\frac{1}{\hat{\sigma}_{n}^2}-\frac{1}{\sigma_0^2}\right\vert=\Op\left(\max\left\{\frac{b_n^{\frac{1}{2}}c_n^{\frac{1}{2}}}{n},\frac{p_n^{\frac{1}{2}}b_n^{\frac{1}{2}}}{n^{\frac{1}{2}}h_n^{\frac{1}{2}}},\frac{b_n}{n}\right\}\right),
\end{equation}
which handles the second factor in the second term in (\ref{hessdiff}). We shall now bound the terms inside the parentheses in the second term in (\ref{hessdiff}). These are $\Op(n)$ because $n^{-\frac{1}{2}}\left\Vert R_n\right\Vert=\Op(1)$, $n^{-\frac{1}{2}}\left\Vert X_n\right\Vert=\On(1)$ and $n^{-1}\left\Vert X_n'R_n\right\Vert=\Op(1)$, by Assumption \ref{l2}. 
From (\ref{hessianpmlex1}), (\ref{ivsigmainvbound}), we conclude that
\begin{eqnarray*}
\left\Arrowvert\hat{H}_n-H_n\right\Arrowvert & =& \Op\left(\frac{p_n^{\frac{3}{2}}b_n^{\frac{1}{2}}}{n^{\frac{1}{2}}h_n}\right)  + \Op\left(\max\left\{\frac{b_n^{\frac{1}{2}}c_n^{\frac{1}{2}}}{n},\frac{p_n^{\frac{1}{2}}b_n^{\frac{1}{2}}}{n^{\frac{1}{2}}h_n^{\frac{1}{2}}},\frac{b_n}{n}\right\}\right)
\end{eqnarray*}
Similarly, it may be shown that $\left\Arrowvert\overline{H}_n-H_n\right\Arrowvert$ has the same order, whence the lemma follows.
\end{proof}
\begin{lemma}\label{hessian2}
 (Lemma B.2 of \cite{Gupta2018})
(i) $\left\Arrowvert H_n-\Xi_n\right\Arrowvert=\Op\left({p_nk_n}/{n^{\frac{1}{2}}}\right)$ if  Assumption \ref{errors2} holds, 
(ii) $\left\Arrowvert L_n-\left({\sigma_0^2}/{2}\right)\Xi_n\right\Arrowvert=\On\left({p_n}/{h_n}\right).$
\end{lemma}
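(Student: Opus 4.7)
My plan is to prove (i) and (ii) separately by block decomposition of the matrices in question, with one genuine probabilistic bound needed in (i).

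For (i), I start from the explicit forms of $H_n$ (at $\theta_{0n},\sigma_0^2$) and $\Xi_n$ and subtract block by block. Writing $R_n=A_n+B_n$ and using $\mathbb{E}uu'=\sigma_0^2 I_n$, the $(2,2)$ block of $H_n-\Xi_n$ vanishes, the $(1,2)$ block equals $(2/(n\sigma_0^2))B_n'X_n$, and the $(1,1)$ block becomes
\[
\frac{2}{n\sigma_0^2}\bigl(A_n'B_n+B_n'A_n+B_n'B_n-\sigma_0^2 P_{j'i,n}\bigr),
\]
after one observes that the $(i,j)$ entry of $\mathbb{E}[B_n'B_n]$ is $\sigma_0^2\,tr(G_{in}'G_{jn})$, which matches $\sigma_0^2(P_{j'i,n})_{ij}$. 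Bounding $\|H_n-\Xi_n\|$ by the sum of block norms then reduces (i) to four residual pieces. Lemma \ref{X'B} gives $(2/n)\|B_n'X_n\|=\Op(p_n^{1/2}k_n^{1/2}/n^{1/2})$, while Lemma \ref{A'B} handles $(2/n)\|A_n'B_n\|=\Op(p_nk_n/n^{1/2})$ and its transpose.

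The remaining and most delicate piece is the centered quadratic form $B_n'B_n-\sigma_0^2 P_{j'i,n}$, whose $(i,j)$ entry is $u'G_{in}'G_{jn}u-\sigma_0^2\,tr(G_{in}'G_{jn})$. Under Assumption \ref{errors2} the standard trace identity for the variance of a quadratic form in iid variables controls this variance by a constant times $\|G_{in}'G_{jn}\|_F^2$ plus a diagonal term of the same order. Combining the elementwise bound $|(W_{in})_{rs}|=\On(1/h_n)$ from Assumption \ref{weights} with the row-sum bound from Assumption \ref{rcsums} gives $\|G_{in}\|_F^2=\On(n/h_n)$, so every entry has variance $\On(n/h_n)$. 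Summing over the $p_n^2$ entries yields $\|B_n'B_n-\sigma_0^2 P_{j'i,n}\|_F=\Op(p_n n^{1/2}/h_n^{1/2})$ and a contribution of order $\Op(p_n/(n h_n)^{1/2})$, dominated by $p_nk_n/n^{1/2}$. Collecting everything proves (i).

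For (ii), I take $L_n=\mathbb{E}[\hat L_n]=n^{-1}\mathbb{E}[R_n,X_n]'[R_n,X_n]$, which under our assumptions is deterministic. A direct block computation using the same identity $\mathbb{E}[R_n'R_n]=A_n'A_n+\sigma_0^2 P_{j'i,n}$, together with the fact that the remaining blocks of $\mathbb{E}[\hat L_n]$ agree exactly with the corresponding blocks of $(\sigma_0^2/2)\Xi_n$, yields
\[
L_n-\tfrac{\sigma_0^2}{2}\Xi_n=\begin{pmatrix}-(\sigma_0^2/n)P_{ji,n}&0\\0&0\end{pmatrix},
\]
so the norm equals $(\sigma_0^2/n)\|P_{ji,n}\|$. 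I then bound the operator norm by the Frobenius norm: each of the $p_n^2$ entries $tr(G_{jn}G_{in})$ satisfies $|tr(G_{jn}G_{in})|\le\|G_{jn}\|_F\|G_{in}\|_F=\On(n/h_n)$ by Cauchy--Schwarz and the $\|G\|_F^2=\On(n/h_n)$ bound used in (i). Hence $\|P_{ji,n}\|\le\|P_{ji,n}\|_F=\On(p_n n/h_n)$, giving $(\sigma_0^2/n)\|P_{ji,n}\|=\On(p_n/h_n)$ as required.

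The only genuine technical obstacle is the variance bound for $B_n'B_n-\sigma_0^2 P_{j'i,n}$ in (i): obtaining the sharper $\On(n/h_n)$ rather than $\On(n)$ order for $\|G_{in}\|_F^2$ — which crucially combines the elementwise $O(1/h_n)$ bound with the row-sum $O(1)$ bound — is what allows this residual term to be absorbed by the dominant $A_n'B_n$ contribution. The same sharpening is also what drives the $h_n$-dependence in (ii).
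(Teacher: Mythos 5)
The paper gives no proof of this lemma, deferring entirely to Lemma B.2 of \cite{Gupta2018}, so there is no in-paper argument to compare against; your reconstruction is correct and follows exactly the standard route of that reference: blockwise subtraction of $H_n-\Xi_n$ and $L_n-\left(\sigma_0^2/2\right)\Xi_n$, the cited bounds on $\left\Vert A_n'B_n\right\Vert$ and $\left\Vert X_n'B_n\right\Vert$, the fourth-moment variance bound for the centred quadratic forms $u'G_{in}'G_{jn}u-\sigma_0^2\,tr\left(G_{in}'G_{jn}\right)$, and the key estimate $\left\Vert G_{in}\right\Vert_F^2=\On\left(n/h_n\right)$ obtained by combining the elementwise $\On\left(1/h_n\right)$ bound with the uniformly bounded row and column sums. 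The one caveat is that the paper never defines $L_n$ explicitly; your reading $L_n=\mathbb{E}\hat L_n$ is the one consistent with Assumptions \ref{l2}--\ref{l1} being deterministic eigenvalue conditions, and your part (ii) identity and the resulting $\On\left(p_n/h_n\right)$ order are correct under it (and would persist, with the same order, under the alternative convention $L_n=n^{-1}\left[A_n,X_n\right]'\left[A_n,X_n\right]$).
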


\begin{lemma}\label{hesseig} (Lemma B.3 of \cite{Gupta2018}) The following inequalities are satisfied: 
$
\plim\left\Arrowvert\hat{H}_n^{-1}\right\Arrowvert\leq C\plim\left\Arrowvert H_n^{-1}\right\Arrowvert\leq C\lim_{n\rightarrow\infty}\left\Arrowvert\Xi_n^{-1}\right\Arrowvert\leq C\left(\displaystyle\varliminf_{n\rightarrow\infty}\underline{\eta}(L_n)\right)^{-1}\leq C.
$
If $h_n$ does not diverge, the above result becomes 
$
\plim\left\Arrowvert\hat{H}_n^{-1}\right\Arrowvert\leq C\plim\left\Arrowvert H_n^{-1}\right\Arrowvert\leq C\left(\displaystyle\varliminf_{n\rightarrow\infty}\underline{\eta}(\Xi_n)\right)^{-1}\leq C,
$
if also $\displaystyle\varliminf_{n\rightarrow\infty}\underline{\eta}(\Xi_n)>0$.
\end{lemma}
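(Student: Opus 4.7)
The plan is a right-to-left perturbation chain. Observe first that $L_n$, $\Xi_n$, $H_n$, and $\hat H_n$ are all symmetric: the trace-based blocks $P_{ji,n}$ and $P_{j'i,n}$ appearing in (\ref{hessdef}) and (\ref{Xidef}) are symmetric by the cyclicity of the trace, and $L_n$ is a Gram matrix. For any symmetric positive definite matrix $A$ in this list one therefore has $\|A^{-1}\|=1/\underline{\eta}(A)$, and Weyl's inequality gives $|\underline{\eta}(A)-\underline{\eta}(B)|\le\|A-B\|$ for any two symmetric $A,B$. The whole argument then consists of chaining this observation along the four matrices of interest, starting from the right.

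At the right end, Assumption \ref{l1} delivers $(\varliminf_{n\to\infty}\underline{\eta}(L_n))^{-1}\le C$ immediately. One step to the left, Lemma \ref{hessian2}(ii) provides $\|L_n-(\sigma_0^2/2)\Xi_n\|=O(p_n/h_n)$, which is $o(1)$ under the divergent-$h_n$ rate conditions of Theorem \ref{approx}. Weyl then yields $\underline{\eta}(\Xi_n)=(2/\sigma_0^2)\underline{\eta}(L_n)+o(1)$, whence $\limsup_n\|\Xi_n^{-1}\|\le (\sigma_0^2/2)/\varliminf_n\underline{\eta}(L_n)$, which is a constant multiple of $(\varliminf\underline{\eta}(L_n))^{-1}$ and establishes the third inequality.

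The next two links are obtained by the same device. Lemma \ref{hessian2}(i) supplies $\|H_n-\Xi_n\|=O_p(p_nk_n/n^{1/2})$, which is $o_p(1)$ under whichever of (\ref{approx2sls1}), (\ref{approxols1}) or (\ref{approx2sls3}) governs the theorem invoking Lemma \ref{hesseig}; Weyl then gives $\underline{\eta}(H_n)=\underline{\eta}(\Xi_n)+o_p(1)$, so $\plim\|H_n^{-1}\|\le C\lim\|\Xi_n^{-1}\|$. Analogously, Lemma \ref{hessian} gives $\|\hat H_n-H_n\|=o_p(1)$ under the same rate restrictions, and one more application of Weyl delivers $\plim\|\hat H_n^{-1}\|\le C\plim\|H_n^{-1}\|$. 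The bounded-$h_n$ variant is handled identically except that the $L_n$-to-$\Xi_n$ bridge is dropped: Lemma \ref{hessian2}(ii) no longer produces a vanishing bound, so the direct hypothesis $\varliminf\underline{\eta}(\Xi_n)>0$ from (\ref{hbddcovass}) of Theorem \ref{approxbdd} is invoked in its place, after which the transfers to $H_n$ and $\hat H_n$ are unchanged.

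The main difficulty is bookkeeping rather than any deep inequality: for each context in which Lemma \ref{hesseig} is cited, the stochastic orders produced by Lemmas \ref{hessian} and \ref{hessian2} must actually be $o_p(1)$. Inspection of the dominant terms $p_nk_n/n^{1/2}$, $p_n^{3/2}b_n^{1/2}/(n^{1/2}h_n)$, $b_n/n$ and (in the divergent-$h_n$ regime) $p_n/h_n$ against the hypotheses confirms they are forced to zero by (\ref{approx2sls1}), (\ref{approxols1}) or (\ref{approx2sls3}), so Lemma \ref{hesseig} imposes no further rate constraints of its own.
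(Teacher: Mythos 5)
Your perturbation chain is correct and is essentially the argument behind the cited Lemma B.3 of \cite{Gupta2018}, which the paper invokes without reproducing: symmetry of $\hat H_n$, $H_n$, $\Xi_n$, $L_n$ plus Weyl's eigenvalue-perturbation bound, fed with the $o_p(1)$ norm differences from Lemmas \ref{hessian} and \ref{hessian2} (and, in the bounded-$h_n$ case, the direct hypothesis $\varliminf_{n\rightarrow\infty}\underline{\eta}(\Xi_n)>0$ in place of the $L_n$-to-$\Xi_n$ bridge). The only point worth making explicit is that $\|A^{-1}\|=1/\underline{\eta}(A)$ requires positive definiteness, which for the random matrices $H_n$ and $\hat H_n$ holds only with probability approaching one via $\underline{\eta}(H_n)\geq\underline{\eta}(\Xi_n)-o_p(1)$; the $\plim$ formulation of the lemma absorbs exactly this, so no gap remains.
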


\begin{lemma}\label{hessianols} 
\mbox{}

$\left\Arrowvert\tilde{H}_n-\overline{H}_n\right\Arrowvert$ and $\left\Arrowvert\tilde{H}_n-H_n\right\Arrowvert$ are $\Op\left(\max\left\{{c_n}/{n},{p_n^2}/{h_n^2},{p_n^{\frac{3}{2}}c_n^{\frac{1}{2}}}/{n^{\frac{1}{2}}h_n},{p_n^{\frac{7}{4}}c_n^{\frac{1}{4}}}/{n^{\frac{1}{4}}h_n^{\frac{3}{2}}}\right\}\right).$
\end{lemma}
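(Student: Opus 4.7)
The plan is to mirror the argument in Lemma \ref{hessian} line-for-line, replacing IV ingredients with their OLS counterparts. By the triangle inequality, it suffices to bound $\|\tilde{H}_n-H_n\|$, since $\|\overline H_n-H_n\|$ is controlled identically (mean values satisfy $\|\overline\theta_n-\theta_{0n}\|\le\|\tilde\theta_n-\theta_{0n}\|$). Applying the triangle inequality to (\ref{hessdef}), decompose
\[
\bigl\Vert\tilde{H}_n-H_n\bigr\Vert\;\le\;\frac{2}{n}\bigl\Vert P_{ji,n}(\tilde\lambda_n)-P_{ji,n}\bigr\Vert+\frac{2}{n}\Bigl\vert\tfrac{1}{\tilde\sigma_n^2}-\tfrac{1}{\sigma_0^2}\Bigr\vert\bigl(\Vert R_n'R_n\Vert+2\Vert X_n'R_n\Vert+\Vert X_n'X_n\Vert\bigr),
\]
exactly as in (\ref{hessdiff}). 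The three Gram-type norms on the right are $\Op(n)$ by Assumption \ref{l2}.

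For the first summand I would apply a mean value expansion of $tr(G_{jn}(\tilde\lambda_n)G_{in}(\tilde\lambda_n))$ around $\lambda_{0n}$, just as in (\ref{hessianpmlex1}): each entry is bounded by $\|\overline{\overline\zeta}_{ij,n}\|\cdot\|\tilde\theta_n-\theta_{0n}\|$, with $\|\overline{\overline\zeta}_{ij,n}\|^2=\On(p_nn^2/h_n^2)$ by Lemma LS.3 of \cite{Gupta2018}. Summing over $i,j=1,\ldots,p_n$, extracting a square root and multiplying by $2/n$ produces the universal factor $p_n^{3/2}/h_n\cdot\|\tilde\theta_n-\theta_{0n}\|$. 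Inserting the OLS consistency rate from \cite{Gupta2013} (Theorem 4.1), which yields $\|\tilde\theta_n-\theta_{0n}\|=\Op(\max\{c_n^{1/2}/n^{1/2},\,p_n^{1/2}/h_n\})$, then reproduces the first two terms $p_n^{3/2}c_n^{1/2}/(n^{1/2}h_n)$ and $p_n^2/h_n^2$ of the claimed bound.

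For the second summand I would rely on the OLS analogue of (A.7) of \cite{Gupta2013}, which bounds $|1/\tilde\sigma_n^2-1/\sigma_0^2|$ by a combination of $\|\tilde\theta_n-\theta_{0n}\|$, the $X_n'B_n$ and $A_n'B_n$ cross products of Lemmas \ref{A'B}--\ref{X'B}, and the sample variance of $u$. A straightforward accounting, combining the OLS rate above with the orders $\Op(n^{1/2}p_nk_n)$ and $\Op(n^{1/2}p_n^{1/2}k_n^{1/2})$ of those cross products (divided by $n$), together with the $\Op(n)$ factor from the bracket, delivers the remaining two rates $c_n/n$ and $p_n^{7/4}c_n^{1/4}/(n^{1/4}h_n^{3/2})$, the latter arising as the cross term between the $c_n^{1/2}/n^{1/2}$ and $p_n^{1/2}/h_n$ components of the OLS rate when they combine multiplicatively with the $p_n^{3/2}/h_n$ coefficient carried by the variance estimator difference.

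The main obstacle will be the careful bookkeeping in this last step: the OLS endogeneity bias contributes a genuinely new $p_n/h_n$-type term to $\|\tilde\theta_n-\theta_{0n}\|$ and hence also to $|1/\tilde\sigma_n^2-1/\sigma_0^2|$, and one must confirm that its interaction with every sub-rate produced by the IV analysis in Lemma \ref{hessian} is dominated by the four terms listed, rather than giving rise to a stricter requirement. Once that verification is done, combining the two bounds and taking the maximum yields the stated order for both $\|\tilde{H}_n-H_n\|$ and $\|\tilde{H}_n-\overline H_n\|$.
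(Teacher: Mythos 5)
Your proposal is correct and follows essentially the same route as the paper: the same decomposition as (\ref{hessdiff}) (the paper's (\ref{hessdiff2})), the same mean-value expansion with Lemma LS.3 of \cite{Gupta2018} giving the $p_n^{3/2}/h_n$ factor, and the OLS rates for $\left\Vert\tilde\theta_n-\theta_{0n}\right\Vert$ and $\tilde\sigma_n^2-\sigma_0^2$, which the paper simply quotes from (A.21) and (A.23) of \cite{Gupta2013} rather than re-deriving via Lemmas \ref{A'B}--\ref{X'B} as you sketch. The one bookkeeping slip is your attribution of the term ${p_n^{\frac{7}{4}}c_n^{\frac{1}{4}}}/{n^{\frac{1}{4}}h_n^{\frac{3}{2}}}$ to the variance-estimator piece (which carries no $p_n^{3/2}/h_n$ coefficient); in the paper it arises from the trace term, and in any case it is harmless since it equals the geometric mean of $p_n^2/h_n^2$ and ${p_n^{\frac{3}{2}}c_n^{\frac{1}{2}}}/{n^{\frac{1}{2}}h_n}$ and is therefore dominated by their maximum, while the variance piece contributes only $c_n/n$, $p_n/h_n^2$ and ${p_n^{\frac{1}{2}}c_n^{\frac{1}{2}}}/{n^{\frac{1}{2}}h_n}$, all dominated by terms already in the stated bound.
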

\begin{proof}
The proof is similar to that of Lemma \ref{hessian} and we only elaborate on the differences from that proof. In this case we need to bound
\begin{equation}
 \frac{2}{n} \left\Vert P_{ji,n}(\tilde{\lambda}_{n})-P_{ji,n}\right\Vert  +  \frac{2}{n}\left\vert\frac{1}{\tilde{\sigma}_{n}^2}-\frac{1}{\sigma_0^2}\right\vert\left(\left\Vert R_n'R_n\right\Vert + 2\left\Vert X_n'R_n\right\Vert+ \left\Vert X_n'X_n\right\Vert\right) \label{hessdiff2}.
\end{equation}
 In the OLS case we have $\tilde{\sigma}_{n}^2-\sigma_0^2=\Op\left(\max\left\{{c_n}/{n},{p_n}/{h_n^2},{p_n^{\frac{1}{2}}c_n^{\frac{1}{2}}}/{n^{\frac{1}{2}}h_n}\right\}\right)$
and $\left\Arrowvert\tilde{\theta}_{n}-\theta_{0n}\right\Arrowvert=\Op\left(\max\left\{{{c_n^{\frac{1}{2}}}/{n^{\frac{1}{2}}},{p_n^{\frac{1}{2}}}/{h_n}}\right\}\right)$,  from (A.23) and (A.21) of \cite{Gupta2013}, respectively. The first term in (\ref{hessdiff2}) is then $\Op\left(\max\left\{{p_n^{\frac{3}{2}}c_n^{\frac{1}{2}}}/{n^{\frac{1}{2}}h_n},{p_n^2}/{h_n^2},{p_n^{\frac{7}{4}}c_n^{\frac{1}{4}}}/{n^{\frac{1}{4}}h_n^{\frac{3}{2}}}\right\}\right)$
while the second one is $\Op\left(\max\left\{{c_n}/{n},{p_n}/{h_n^2},{p_n^{\frac{1}{2}}c_n^{\frac{1}{2}}}/{n^{\frac{1}{2}}h_n}\right\}\right)$.
We may then argue in a similar way that the Hessian evaluated at the OLS estimate differs from its value at an intermediate point in norm by the same to conclude the proof.
\end{proof}

\begin{lemma}\label{hesseigols} (Lemma B.3 of \cite{Gupta2018}) 
$
\plim\left\Arrowvert\tilde{H}_n^{-1}\right\Arrowvert\leq\plim\left\Arrowvert H_n^{-1}\right\Arrowvert\leq\lim_{n\rightarrow\infty}\left\Arrowvert\Xi_n^{-1}\right\Arrowvert\leq\frac{\sigma_0^2}{2}\left(\displaystyle\varliminf_{n\rightarrow\infty}\underline{\eta}(L_n)\right)^{-1}\leq C.
$
\end{lemma}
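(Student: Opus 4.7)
The plan is to establish the four inequalities in the stated chain from right to left, relying throughout on Weyl's eigenvalue perturbation bound together with the continuous mapping theorem (to commute $\plim$ with the continuous map $A\mapsto 1/\underline{\eta}(A)$ on the set of symmetric positive-definite matrices whose smallest eigenvalue is bounded away from zero). The rightmost bound $(\sigma_0^2/2)(\varliminf_{\ninf}\underline{\eta}(L_n))^{-1}\leq C$ is immediate from Assumption \ref{l1}.

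For the next inequality, $\lim_{\ninf}\left\Vert\Xi_n^{-1}\right\Vert\leq(\sigma_0^2/2)(\varliminf_{\ninf}\underline{\eta}(L_n))^{-1}$, I would apply Lemma \ref{hessian2}(ii), which gives $\left\Vert L_n-(\sigma_0^2/2)\Xi_n\right\Vert=\On(p_n/h_n)$. Since the rate condition (\ref{approxols1}) forces $p_n/h_n\to 0$ (via $p_n^{3/2}k_n/h_n\to 0$), Weyl's inequality yields $\underline{\eta}((\sigma_0^2/2)\Xi_n)\geq\underline{\eta}(L_n)-\On(p_n/h_n)$, whence $\left\Vert\Xi_n^{-1}\right\Vert=1/\underline{\eta}(\Xi_n)\leq(\sigma_0^2/2)/\underline{\eta}(L_n)+o(1)$, and taking the limit delivers the claim.

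For $\plim\left\Vert H_n^{-1}\right\Vert\leq\lim_{\ninf}\left\Vert\Xi_n^{-1}\right\Vert$, I invoke Lemma \ref{hessian2}(i): $\left\Vert H_n-\Xi_n\right\Vert=\Op(p_nk_n/n^{1/2})=\op(1)$ under (\ref{approxols1}) since $(p_nk_n/n^{1/2})^2\leq p_n^3k_n^4/n=o(1)$. Weyl's inequality then gives $\underline{\eta}(H_n)\geq\underline{\eta}(\Xi_n)-\op(1)$, so $\left\Vert H_n^{-1}\right\Vert\leq\left\Vert\Xi_n^{-1}\right\Vert+\op(1)$ with probability approaching one, and passing to $\plim$ yields the bound. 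The leftmost inequality $\plim\left\Vert\tilde{H}_n^{-1}\right\Vert\leq\plim\left\Vert H_n^{-1}\right\Vert$ is handled analogously via Lemma \ref{hessianols}, whose bound is $\op(1)$ under (\ref{approxols1}).

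The main technical obstacle is ensuring that in each step the perturbation is small enough to keep the smallest eigenvalue of the perturbed matrix bounded away from zero with probability approaching one, so that the inverse is well defined and its norm is uniformly controlled. This reduces to checking that the $\On$ and $\Op$ rates in Lemmas \ref{hessian2} and \ref{hessianols} collapse to $o(1)$ and $\op(1)$ respectively under (\ref{approxols1}), after which the continuous mapping theorem closes the chain.
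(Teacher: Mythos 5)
Your argument is correct, but note that the paper itself does not supply a proof of this lemma: it is imported wholesale, with the citation ``Lemma B.3 of \cite{Gupta2018}'', and is only invoked in the proof of Theorem \ref{approx}$(ii)$ (whose assumptions, including (\ref{approxols1}), are in force for all appendix lemmas). Your blind reconstruction is therefore a genuinely self-contained alternative: you derive the chain from ingredients internal to this paper, namely Weyl's eigenvalue perturbation inequality combined with Lemma \ref{hessian2}$(i)$--$(ii)$, Lemma \ref{hessianols}, and Assumption \ref{l1}, checking that each perturbation rate collapses to $o(1)$ or $\op(1)$ under (\ref{approxols1}) (e.g.\ $p_n/h_n\le p_n^{3/2}k_n/h_n\to 0$ and $p_n^2k_n^2/n\le p_n^3k_n^4/n\to 0$), which is exactly the style of argument underlying the cited result. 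Two small points deserve explicit mention if this were written out in full: the identification $\left\Vert A^{-1}\right\Vert=1/\underline{\eta}(A)$ used at every link requires that $\Xi_n$, $H_n$ and $\tilde H_n$ be symmetric and, with probability approaching one, positive definite, which follows from symmetry of the Hessian and the lower eigenvalue bounds you establish recursively along the chain (starting from $\underline{\eta}(\Xi_n)\ge (2/\sigma_0^2)\bigl(\underline{\eta}(L_n)-o(1)\bigr)$ and Assumption \ref{l1}); and the appeal to the continuous mapping theorem is dispensable, since the elementary bound $1/(x-\epsilon)\le 1/x+o(1)$ for $x$ bounded away from zero already converts each Weyl bound into the stated inequality between (limits of) inverse norms. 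With those caveats made explicit, your proof is complete and buys the reader a verification that the imported lemma indeed holds for the OLS-based Hessian $\tilde H_n$, which is not literally the object treated in \cite{Gupta2018} and is covered here only through the new Lemma \ref{hessianols}.
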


\end{subappendices}

\clearpage

\bibliographystyle{chicago}
\bibliography{thesisb}
\end{document}